\def\R{{\mathds R}}
\def\C{{\mathds C}}
\def\e{\mathrm{e}}
\newcommand{\be}{\begin{equation}}
\newcommand{\ee}{\end{equation}}
\newcommand{\bI}{{\mbox{\boldmath $I$}}}
\newcommand{\bz}{{\mbox{\boldmath $z$}}}
\newcommand{\bn}{{\mbox{\boldmath $n$}}}
\newcommand{\bv}{{\mbox{\boldmath $v$}}}
\newcommand{\bp}{{\mbox{\boldmath $p$}}}
\newcommand{\bC}{{\mbox{\boldmath $C$}}}
\newcommand{\bR}{{\mbox{\boldmath $R$}}}
\newcommand{\bS}{{\mbox{\boldmath $S$}}}
\newtheorem{proposition}{Proposition}
\newtheorem{corollary}{Corollary}
 \newcommand{\test}{\mbox{$
 \begin{array}{c}
 \stackrel{ \stackrel{\textstyle H_1}{\textstyle >} }{ 
 \stackrel{\textstyle <}{ \textstyle  H_0} }
\end{array}
$}}
\newcommand{\ed}{\color{black}} 
\begin{document}

\title{
\ed Design of Customized Adaptive Radar Detectors in the CFAR Feature Plane}

\author{Angelo Coluccia${}^*$, \IEEEmembership{Senior Member, IEEE}, Alessio Fascista, \IEEEmembership{Member, IEEE}, and Giuseppe Ricci, \IEEEmembership{Senior Member, IEEE}
\thanks{All authors are with the Dipartimento di Ingegneria dell'Innovazione,
        Universit\`a del Salento, Via Monteroni, 73100 Lecce, Italy.
        E-Mail: name.surname@unisalento.it
        
        ${}^*$ Corresponding author: angelo.coluccia@unisalento.it}
}

\maketitle

\begin{abstract}
The paper addresses the design of adaptive radar detectors having desired behavior, in Gaussian disturbance with unknown statistics. Specifically, given detection probability specifications for chosen signal-to-noise ratios and steering vector mismatch levels, a methodology for the optimal design of {\ed customized} CFAR detectors is devised in a suitable feature plane based on maximal invariant statistics. To overcome the analytical and numerical intractability of the resulting optimization problem, a novel general reduced-complexity algorithm is developed, which is shown to be effective in providing a close approximation of the desired detector. The proposed approach solves the open problem of ensuring a prefixed false alarm probability while controlling the behavior under both matched and mismatched conditions, so enabling the design of fully {\ed customized} adaptive CFAR detectors.
\end{abstract}

\begin{IEEEkeywords}
Radar, GLRT, CFAR property, robust detectors, selective detectors, mismatched signals, feature space
\end{IEEEkeywords}

\section{Introduction}\label{sec:intro}

The detection of targets embedded in unknown Gaussian disturbance composed of thermal noise, clutter, and possible jamming interferers is a central problem in the radar detection literature.
{\ed A consolidated approach is to resort to the generalized likelihood ratio test (GLRT) approach, in which the statistics of the disturbance are estimated through the aid of a set of secondary data, having the same statistics as $H_0$.
Since the pioneering work by Brennan and Reed \cite{Brennan}, and then Kelly \cite{Kelly}, particular focus has been put on obtaining statistics that do not depend upon unknown clutter or noise statistics under the $H_0$ hypothesis: this in fact guarantees that the detection threshold can be set to ensure a prefixed false alarm probability ($P_{fa}$), a property referred to as constant false alarm rate (CFAR). }

Several detectors have been derived in the past decades based on such a rationale. Significant attention has been paid to the design of CFAR detectors with desired properties in terms of probability of detection ($P_d$) under mismatched conditions. Typically, a robust detector is desirable to cope with possible off-grid conditions due to angle and/or Doppler quantization, which imply that the actual steering vector may be not aligned with the nominal one; conversely, a selective detector is desirable to reject unwanted signals due to jamming or spectrum co-existence \cite{TSP_con_Richmond}.  In this respect,  Kelly's detector is a moderately selective receiver \cite{Kelly89}, while the adaptive matched filter (AMF) \cite{Kelly-Nitzberg} is a robust receiver. Other well-known examples of selective receivers are the adaptive coherence estimator (ACE) \cite{ACE} (also called adaptive normalized matched filter), the ABORT or whitened-ABORT (WABORT) detectors  \cite{Pulsone-Rader,Fabrizio-Farina,W-ABORT}, as well as the Rao detector \cite{RAO}.
A further type of receivers is based on the idea of inserting a parameter in a well-known statistic, so as to obtain a tunable detector: for instance, in Kalson's detector \cite{Kalson} a nonnegative parameter is introduced in the Kelly's statistic, in order to control the rejection level of mismatched signals, in between the AMF and Kelly's detector.

The design of detectors with suitable symmetries that can also ensure the CFAR behavior has found an important theoretical tool in the principle of invariance \cite{Bose_Steinhardt,Ramirez-Santamaria,Conte_invariant,DeMaio_invariant}. 
In our previous work \cite{TSP_CFAR-FP}, a ``CFAR feature plane'' (CFAR-FP) is introduced for a suitable maximal invariant of the classical adaptive radar detection problem after Kelly's formulation. {\ed In the CFAR-FP, radar returns are mapped to two-dimensional clusters whose properties in terms of position and shape in the plane can be analytically characterized and expressed as a function of few main parameters, so shedding new light on the behavior of several well-known CFAR detectors.}

One of the challenges in adaptive detection is how to ensure a prefixed $P_{fa}$ and, at the same time, control the behavior under both matched and mismatched conditions, while guaranteeing the CFAR property (as better discussed in Sec. \ref{sec:background}). Moreover, enhancing either the robustness or the selectivity of a radar detector often comes at the price of a $P_d$ loss under matched conditions \cite{BOR-MorganClaypool}. Although \cite{TSP_CFAR-FP} provided tools for  interpreting the performance of CFAR detectors, the design therein was mostly heuristic. Indeed,
the development of a general methodology for the design of fully customized adaptive CFAR detectors is still an open research problem.
 
Aiming at advancing the literature towards this direction, in the present paper an original design methodology is devised based on the CFAR-FP.
{\ed Specifically, after introducing the required background and discussing in detail the importance of having an analytical framework to guide the design of customized detectors in the CFAR-FP in Sec. \ref{sec:background}}, the following  contributions are provided:
\begin{itemize}
    \item Given a desired detection behavior, corresponding to $P_d$ specifications for chosen signal-to-noise ratios and steering vector mismatch levels (examples are discussed in Sec.~\ref{sec:examples}), the optimal infinite-dimensional design problem of customized CFAR detectors working at a preassigned $P_{fa}$ is formulated, with suitable cost function and constraint (Sec. \ref{sec:optimal_infinite}). 
    To overcome its mathematical intractability, a finite-dimensional version is considered (Sec. \ref{sec:optimal_finite}), using a suitable parametric family of approximation functions; the resulting optimization problem is however  highly-nonlinear, and state-of-the-art numerical solvers fail to provide a feasible solution (Sec. \ref{sec:choice}).
    \item  To address such a challenge, a novel general reduced-complexity algorithm is developed, {\ed assuming a piecewise-linear structure for the approximation functions that allows to characterize the involved statistics}. Specifically,  convenient analytical formulas are derived to ease the evaluation of the cost function and provide a closed-form expression for the $P_{fa}$ constraint (Sec. \ref{sec:algorithm}). Based on such results, a sub-optimal search strategy in a reduced, though sufficiently rich, solution space is devised.
      \item The effectiveness of the proposed approach is demonstrated by designing two novel detectors representative of either robust or selective behaviors (among the examples of Sec.~\ref{sec:examples}). The approximation accuracy is evaluated in comparison with a plain solution, and a performance assessment against  well-known detectors is also performed (Sec. \ref{sec::NumResults}). {\ed A thorough performance analysis conducted on both simulated and real radar data shows that the proposed approach is effective in providing a close approximation of the desired detector}, while ensuring a prefixed $P_{fa}$ and controlling the behavior under both matched and mismatched conditions. 
\end{itemize}
We conclude the paper in Sec. \ref{sec:conclusions}.

\section{Background and Motivation}\label{sec:background}

In this section we recall the classical formulation of the radar detection problem, and review its interpretation in the CFAR-FP {\ed introduced} in \cite{TSP_CFAR-FP}. This will serve to provide the necessary background, for self-consistency, and also to illustrate in more details the motivation of the present work.

\subsection{\ed Problem Formulation}

The classical hypothesis testing problem for 
detecting the possible presence of a (point-like) {\ed coherent} target from a given cell under test (CUT) is given by
\begin{equation}
\left\{
\begin{array}{ll}
H_{0}: & \bz =  \bn \\
H_{1}: & \bz = \alpha \bv + \bn
\end{array} 
\right. \label{eq:binary_test}
\end{equation}
where
$\bz \in \C^{N \times 1}$, $\bn \in \C^{N \times 1}$, and $\bv \in \C^{N \times 1}$ are 
the received vector, the overall disturbance term, and 
the known space-time steering vector of the target. The unknown deterministic parameter $\alpha \in \C$ is the target amplitude, depending on radar cross-section, multipath, and other channel effects.

Kelly \cite{Kelly} derived a GLRT for problem \eqref{eq:binary_test} assuming complex {\ed Normal} distributed $\bn$ with zero mean and unknown (Hermitian) positive definite covariance matrix $\bC$, and $K \geq N$ independent and identically distributed training (or secondary) data $\bz_1, \ldots, \bz_{K}$ (independent of $\bz$, free of target echoes, and  sharing with the CUT the statistical characteristics of the noise). Let $\bS=\sum_{k=1}^K \bz_k\bz_k^\dag$, then
Kelly's statistic is
\begin{align}
t_{\text{\tiny{Kelly}}}  &= 
 \frac{|\bz^{\dagger} \bS^{-1} \bv |^2}{\bv^{\dagger} \bS^{-1} \bv \, (1+ \bz^{\dagger} \bS^{-1} \bz )}  \label{eq:Kelly}
\end{align}
with $(\cdot)^\dagger$ the Hermitian operator and $|\cdot|$ the modulus of a complex number. Eq.~\eqref{eq:Kelly} can be rewritten as $t_{\text{\tiny{Kelly}}} = \frac{\tilde{t}}{1+\tilde{t}}$, where
\begin{equation}
\tilde{t} = \frac{|\bz^{\dagger} \bS^{-1} \bv |^2}{\bv^{\dagger} \bS^{-1} \bv \, \left(1+ \bz^\dag \bS^{-1} \bz - \frac{|\bz^{\dagger} \bS^{-1} \bv |^2}{\bz^{\dagger} \bS^{-1} \bz} \right)}\label{eq:t_tilde}
\end{equation}
hence $t_{\text{\tiny{Kelly}}}$ and $\tilde{t}$ are equivalent statistics
(and of course $ \tilde{t} = \frac{t_\text{\tiny Kelly}}{1-t_\text{\tiny Kelly}} $). As mentioned, a remarkable property of Kelly's detector is that it has the CFAR property; moreover, it behaves as a moderately selective detector when the actual steering vector $\bp$ in the received signal $\bz$ is not aligned with the nominal one $\bv$. The mismatch level is quantified by the squared cosine of the angle between these two vectors, i.e.,
\begin{equation}
\cos^2 \theta = \frac{\bp^\dag\bC^{-1} \bv}{\bp^\dag\bC^{-1}\bp \bv^\dag\bC^{-1}\bv}. \label{eq:cos2theta}
\end{equation}

Invariance theory has shown that CFAR detectors in Gaussian disturbance  can be written in terms of equivalent pairs of  maximal invariant statistics; a convenient choice adopted in \cite{TSP_CFAR-FP} is $(\beta, \tilde{t})$, where
\begin{equation}
\beta= \frac{1}{ 1+\bz^{\dagger} \bS^{-1} \bz  - \frac{|\bz^{\dagger} \bS^{-1} \bv|^{2}}{\bv^{\dagger} \bS^{-1} \bv}}.
\end{equation}
Therefore, the test associated to a generic CFAR detector $X$ (including AMF, ACE, Kalson, etc.) can be rewritten as
\begin{equation}
t_X(\beta, \tilde{t}) \test \eta_X \label{eq:test1}
\end{equation}
where  $t_X$ is the decision statistic and $\eta_X$ is the threshold that guarantees the desired $P_{fa}$.

\subsection{\ed Detection in the CFAR Feature Plane}

It has been shown in \cite{TSP_CFAR-FP} that in most cases the curve $t_X(\beta, \tilde{t}) = \eta_X$ in the $\beta$-$\tilde{t}$ CFAR-FP can be made explicit, meaning that \eqref{eq:test1} is equivalent to
\begin{equation}
\tilde{t}  \test f_X(\beta; \eta_X) \label{eq:test2}
\end{equation}
where $f_X$ is called \emph{decision region boundary} and separates the plane in two regions: for data points falling in the bottom-most part the  detector will decide for  $H_0$, while for the upper-most part it will decide for $H_1$, as shown in Fig. \ref{fig:FP}.

\begin{figure}
\centering
\includegraphics[width=8cm]{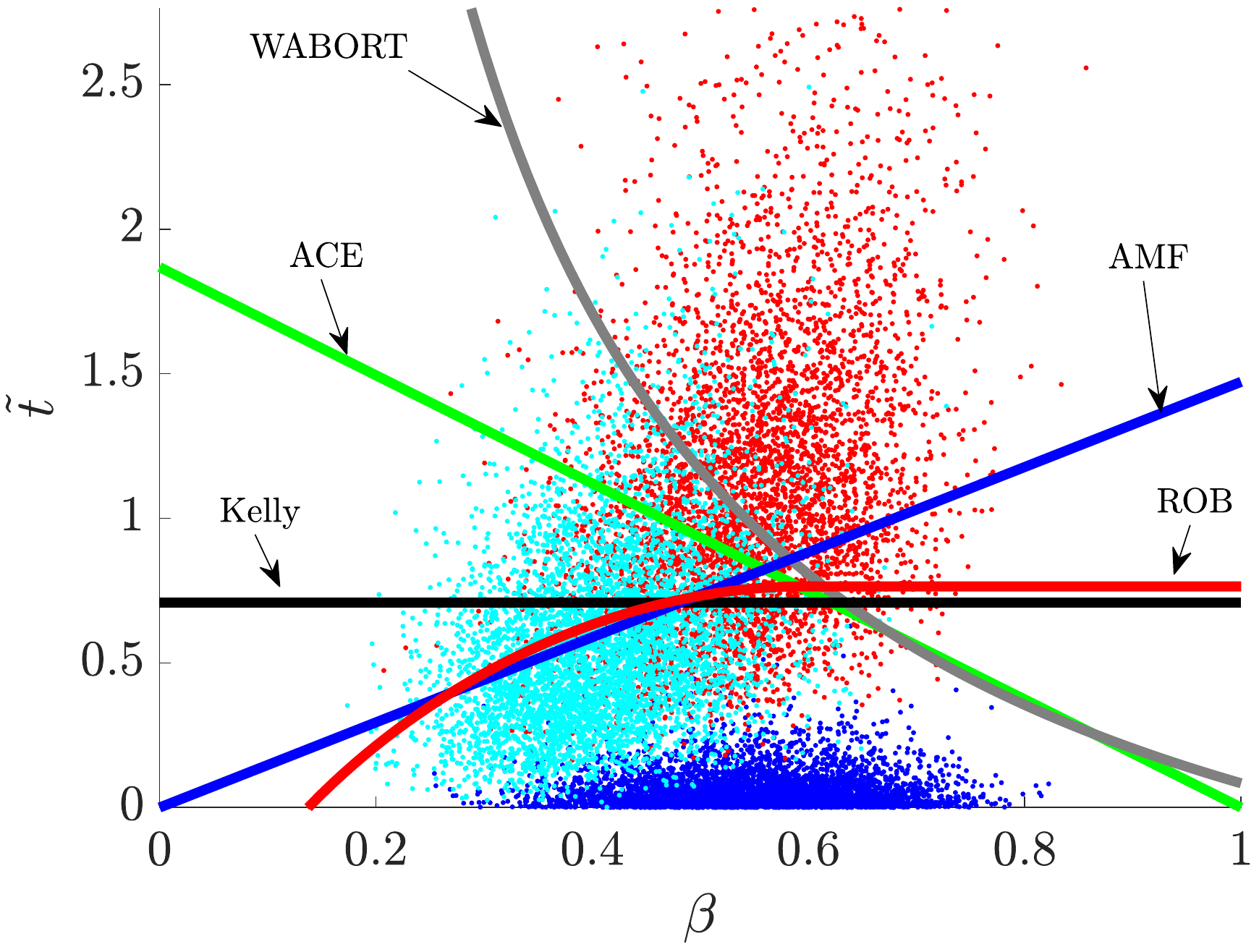}
\caption{Some well-known detectors in the CFAR-FP, for $N=8$, $K=16$, $P_{fa}=10^{-4}$. Blue dots are the $H_0$ cluster, red dots are the $H_1$ cluster for SNR = 15 dB, while cyan dots represent a  cluster under mismatched conditions ($\cos^2 \theta=0.65$).}\label{fig:FP}
\end{figure}

By studying how  the data points $(\beta,\tilde{t})$ cluster together and concentrate or spread compared to the decision region boundary, according to the mismatch level $\cos^2 \theta$ given in \eqref{eq:cos2theta} and signal-to-noise ratio (SNR) defined as
\begin{equation}
\gamma = |\alpha|^2 \bp^{\dagger}  \bC^{-1} \bp \in\R_+
\label{eq:SNR} ,
\end{equation}
several insights on the behavior of the detector were obtained in \cite{TSP_CFAR-FP}. 
In particular, Kelly's horizontal boundary best separates the $H_0$ cluster from any $H_1$ cluster under matched conditions; conversely, detectors with marked robust or selective behaviors 
exhibit an oblique linear or  non-linear boundary, with increasing trend for robust behavior and decreasing trend for selective behavior, as visible in Fig. \ref{fig:FP}, respectively, for AMF and the {\ed robustified GLRT (ROB)} \cite{ROB} and for ACE and WABORT.

Another interesting finding in \cite{TSP_CFAR-FP} is that  $\beta$ 
has a pivotal role in determining the trade-off between performance under matched and mismatched conditions. 
This behavior is observed for instance in  the ROB detector, which has $P_d$ similar to Kelly's detector under matched conditions but is very robust: in fact, its boundary is  increasing  in the lower range of $\beta$ and then becomes constant  (horizontal) for larger values (see again Fig. \ref{fig:FP}).
Another detector that combines the characteristics of two well-known detectors  naturally arises when  a K-nearest neighbors decision rule is adopted  \cite{KNN,KNN2}: the resulting detector has intermediate behavior between Kelly's and AMF, and in fact its decision region boundary approximates a piecewise-linear function close to the positive-slope line of AMF for the lower range of $\beta$ and attains Kelly's horizontal line in the upper range\footnote{The intersection point between the two lines for the case of Fig. \ref{fig:FP} is approximately around $\beta=0.5$.}. 

\subsection{\ed Design Challenges and Motivations}

Unfortunately, a general methodology for the design of customized detectors  (including, but not limited to, combinations of two or more existing detectors) with prescribed behavior in terms of $P_d$ under matched and mismatched conditions  is still missing in the literature.
In the CFAR-FP, this corresponds to choosing an arbitrary decision region boundary of interest 
according to the intended classification of signal points for different SNRs and mismatch levels (clusters) as either $H_0$ or $H_1$. However, the  $P_{fa}$ of the resulting detector cannot be controlled upfront, not even when joining  parts taken from existing detectors having common $P_{fa}$.
Indeed, the relationships between decision region boundary and the induced $P_{fa}$ and $P_d$'s (under matched and mismatched conditions) are highly non-linear; thus, any adjustment around a certain region of the curve (aimed at matching the $P_{fa}$) will have an uneven impact according to the density of points (belonging to the different clusters) that fall in that region of the CFAR-FP ---  so making such an adjustment unintuitive and non-trivial. 
\begin{figure}
\centering
\includegraphics[width=0.4\textwidth]{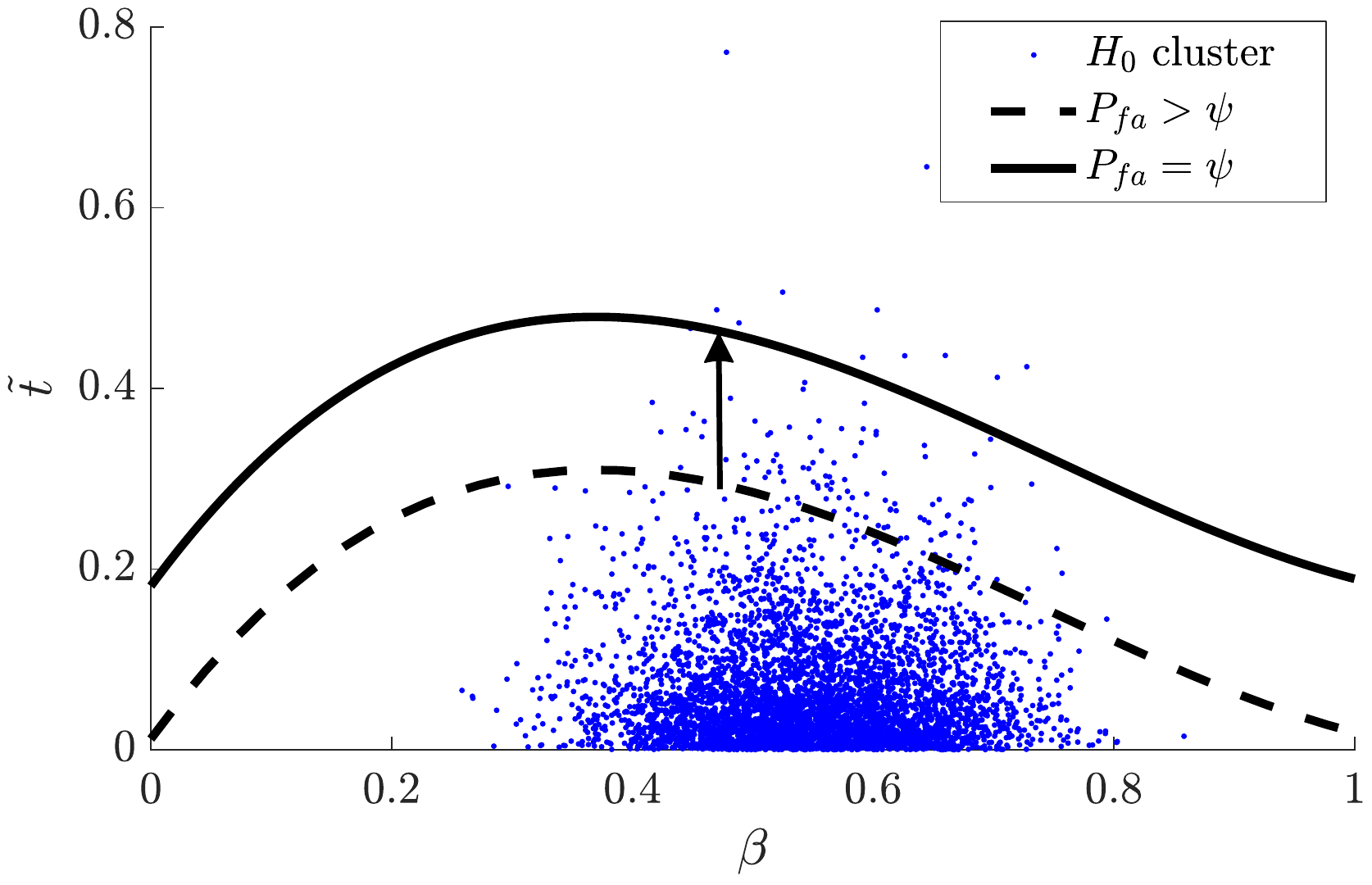}
\caption{\ed Adjustment of  $P_{fa}$  by stiff shift of a decision region boundary.}\label{fig:shift}
\end{figure}
The naive solution adopted in \cite{TSP_CFAR-FP} to adjust the $P_{fa}$ was to stiffly shift the desired curve upwards or downwards, iteratively, while checking $P_{fa}$ and stopping at equality. {\ed As it can be seen from Fig. \ref{fig:shift}, if for instance the desired detector's curve in the CFAR-FP (dashed line) yields a $P_{fa}$ higher than the chosen value $\psi$ --- meaning that the integral of the joint pdf of $(\beta, \tilde{t})$ over the area above the curve exceeds $\psi$ --- the decision region boundary is slightly shifted upwards. This will result in a decrease of $P_{fa}$ since less points of the $H_0$ cluster will fall above the decision boundary. The process is iterated, by shifting the curve upwards or downwards according to the computed value of $P_{fa}$, until a curve is found for which the corresponding $P_{fa}$ exactly matches $\psi$ (solid line).}
Unfortunately, in doing so the performance  in terms of $P_d$ and/or  desired behavior under mismatched conditions will degrade, i.e., the shift may generally jeopardize the design. 

The discussion above motivates the importance of addressing the design of customized detectors in the CFAR-FP under an optimal approximation setup, while guaranteeing the desired $P_{fa}$, as discussed below.

\section{Optimal Design of Customized Adaptive CFAR Radar Detectors}\label{sec::2}

\begin{figure}
\centering
\includegraphics[width=0.42\textwidth]{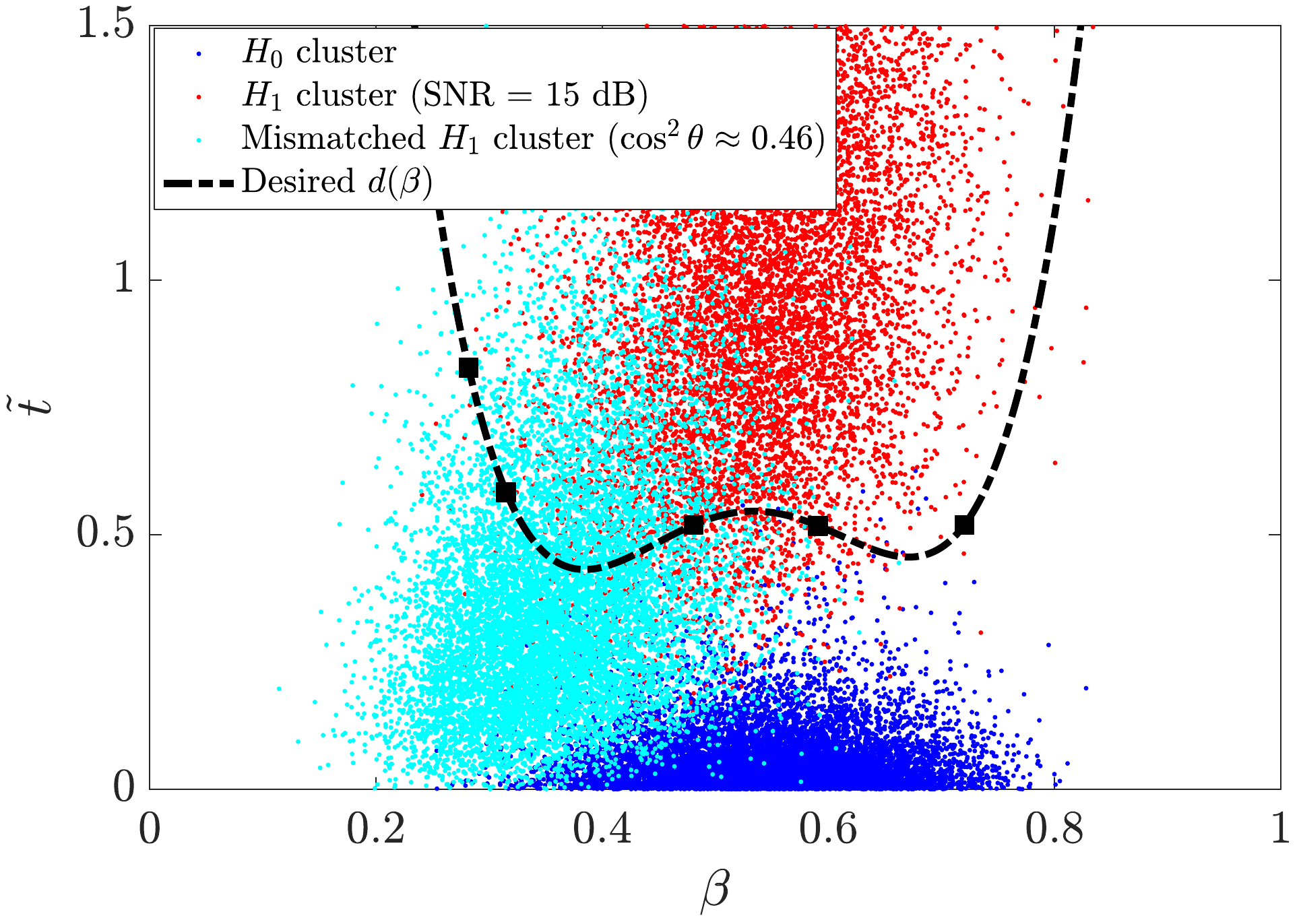}
\caption{Desired decision region boundary $d(\beta)$ of the double-well detector in the CFAR-FP.}\label{fig:ad-hoc}
\end{figure}

\subsection{Preliminary Considerations and Design Examples}\label{sec:examples}

Considering the goal of designing customized detectors with desired properties, while working at a preassigned $P_{fa}$,  we denote by $d(\beta)$ the  decision region boundary of a desired CFAR detector, thus having the form \eqref{eq:test2}. Notice that it is irrelevant how such an  expression is obtained. 
The most general case consists in directly drawing $d(\beta)$ according to an intended classification of the signal points as either $H_0$ or $H_1$ in the different regions of the CFAR-FP. In Fig.~\ref{fig:ad-hoc} we report an example in which the customized detector aims at exhibiting good rejection capabilities of the mismatched signals, but providing at the same time high $P_d$ under matched conditions; given its shape, it will be referred to as ``double-well" detector. {\ed The colored dots represent several realizations of the random variables $(\beta,\tilde{t})$ obtained from Monte Carlo simulation of data vectors $\bz, \bz_1,\ldots,\bz_K$: specifically, blue dots correspond to data generated under $H_0$ (noise only), red dots are the data under $H_1$ for SNR = 15 dB, whereas cyan dots are for $H_1$ under mismatched conditions for $\cos^2 \theta \approx 0.46$.}
The desired $d(\beta)$ is drawn as a fourth-order spline parameterized by five control points (hence it will pass through all of them): the first two  are chosen in correspondence of the upper left-most part of the mismatched $H_1$ cluster, while the remaining three  are  set to better separate the matched $H_1$ cluster from  $H_0$ according to the points dispersion, without including too many points of the mismatched $H_1$ cluster {\ed that fall} underneath $d(\beta)$. 

Special cases of $d(\beta)$ can be obtained by combining two or more decision region boundaries of well-known detectors for non-overlapping intervals of $\beta$, as discussed in Sec. \ref{sec:background},  according to the desired performance. A possibility is to take selected points from the curves of different existing detectors over the domain $\beta \in[0, 1]$, and use them as control points for fitting a (low-order) spline; this will result in a curve with ``intermediate'' characteristics (not necessarily passing through all control points). An example is shown in Fig.~\ref{fig:AMF-ROB}, where the AMF and ROB detectors have been selected and a third-order spline has been used for the fitting. Clearly, it is also possible to simply juxtapose the parts taken from the different detectors, without any interpolation.

\begin{figure}
\centering
\includegraphics[width=0.40\textwidth]{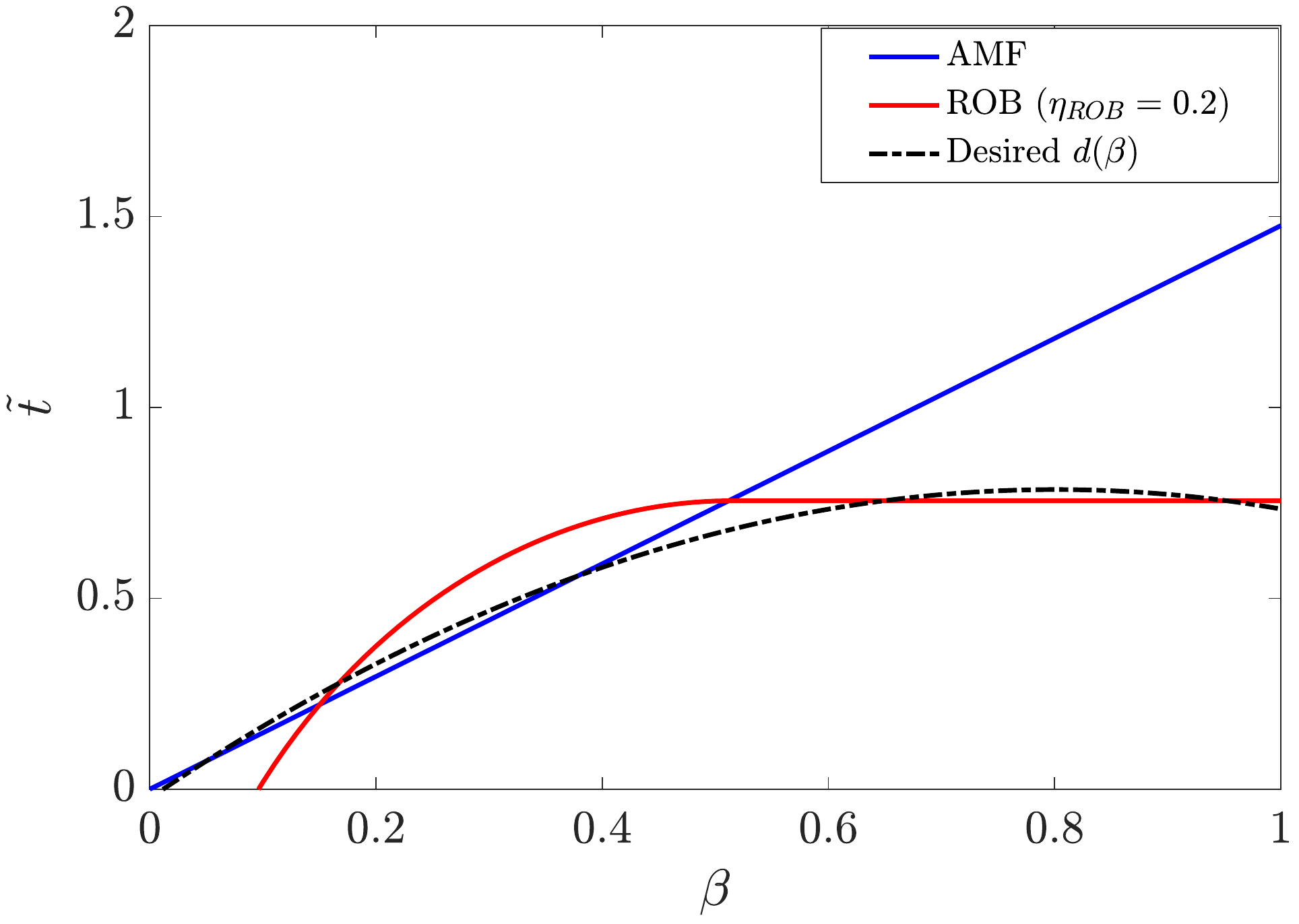}
\caption{Desired decision region boundary $d(\beta)$ obtained as a combination of ROB and AMF detectors.}\label{fig:AMF-ROB}
\end{figure}

\subsection{Optimal Infinite-Dimensional Design Problem}
\label{sec:optimal_infinite}

By construction, $d(\beta)$ corresponds to a CFAR detector with desired performance under matched and mismatched conditions. However, as anticipated, the resulting value of $P_{fa}$ is unpredictable, even in case of combination of detectors with the same $P_{fa}$. {\ed In order to come up with a detector working  at a preassigned $P_{fa}$, which is of fundamental importance in the radar context, a strategy is needed that takes as reference the desired $d(\beta)$ and approximates it through a suitable parametric function until the $P_{fa}$ constraint is exactly fulfilled, while retaining as much as possible its behavior in terms of detection capabilities.}

By taking inspiration from the classical Neyman-Pearson approach, the problem can be  formulated as determining an approximation $f(\beta)$ of $d(\beta)$ that maximizes  $P_d$ under a constraint on $P_{fa}$. We introduce however a more general objective function, in order to take into account also the performance under mismatched conditions, hence in turn to obtain a desired trade-off in this respect. Specifically, we propose to minimize the weighted least squares cost functional
\begin{equation}\label{eq::P1_initial}
\mathcal{C}_1  (f)  =  \sum_{i=1}^S \! w_i(f; \bm{\rho}_1,\ldots,\bm{\rho}_S)\,  e(f; \bm{\rho}_i)
\end{equation}
obtained from a set of $S$ specifications $\bm{\rho}_i = [\gamma_i \ \lambda_i \ \psi_i]^\mathsf{T}$ {\ed (with $(\cdot)^\mathsf{T}$ the transpose operator)} encoding the desired probabilities of detection $\psi_i$ for each chosen pair of SNR $\gamma_i$ and mismatch value $\lambda_i$, with
\begin{equation}
e(f; \bm{\rho}_i) = \left[P(\tilde{t} \! > \!  f(\beta)| \gamma=\gamma_i, \cos^2\! \theta = \lambda_i ) - \psi_i\right]^2
\end{equation}
denoting the squared error between the desired $\psi_i$ and the probability that the detector having decision region boundary $f(\beta)$ will decide for $H_1$ given $\gamma_i$ and $\lambda_i$, {\ed and $w_i$ chosen weighting functions}.
These specifications should be interpreted as ``soft constraints'' that will be not necessarily fulfilled in the solution minimizing \eqref{eq::P1_initial}, given their possible conflicting nature\footnote{In this respect, a similarity can be observed with the problem of filter design as well as beamformer design.}. 
We will discuss in Sec. \ref{sec:proposed} how specification values can be automatically chosen in practice to encode the desired behavior given by $d(\beta)$. Moreover, we will specify a weighting strategy in which $w_i$'s depend on both the function $f(\cdot)$ itself and  the specifications $\bm{\rho}_i, i=1,\ldots,S$. 
 
The infinite-dimensional optimization problem is given by
\begin{equation}\label{eq::inf_dim}
\begin{array}{l}
  \underset{f \in \mathcal{F}}{\mathrm{minimize}} \ \mathcal{C}_1(f) \\
 \text{s.t. } \mathcal{C}_0(f) = P_{fa}
\end{array} 
\end{equation}
where $\mathcal{F}$ is a chosen space of functions defined over $[0,1]$, $\mathcal{C}_1(f)$ is given in  \eqref{eq::P1_initial},
and
\begin{equation}
\mathcal{C}_0(f) = P(\tilde{t} >  f(\beta)| H_0) = 1 - \int_0^1 F_{\tilde{t}|H_0} (  f(\beta))p(\beta) d\beta \label{eq:C0f}
\end{equation}
with $F_{\tilde{t}|H_0}(\cdot)$ denoting the cumulative distribution function (CDF) of $\tilde{t}$  under the $H_0$ hypothesis, and $p(\cdot)$ denoting the pdf of the complex central Beta distribution {\ed (better discussed in Sec. \ref{sec:algorithm})}.
{\ed The exact solution of such an infinite-dimensional optimization problem would yield the curve that best approximates the desired $d(\beta)$ in the CFAR-FP, minimizing at the same time the deviation from the desired behavior expressed through the $S$ specifications. Unfortunately, it involves transcendental functions in both objective function and constraint}, making its analytical resolution a formidable task. Indeed, next Sec. \ref{sec:proposed} is devoted to the development of a low-complexity approach based on a finite-dimensional version of \eqref{eq::inf_dim}, which will however require  to devise a suitable search strategy given the highly non-convex nature of the optimization problem at hand.

\subsection{Optimal Finite-Dimensional Design Problem}\label{sec:optimal_finite}

We propose a more practical approach which consists in solving a finite-dimensional version of the (intractable) optimization problem \eqref{eq::inf_dim}, where the goal is to approximate the desired decision region boundary $d(\beta)$ through a parametric curve $f(\beta; \bm{x})$, with $\bm{x}$ a vector of real parameters.
The optimal $\bm{x}$ that guarantees a preassigned $P_{fa}$ and minimizes the cost function is obtained by solving the optimization problem
\begin{equation}
\begin{array}{l}
\underset{\bm{x}}{\mathrm{minimize}}
 \ C_1(\bm{x}) \\
 \text{s.t. } C_0(\bm{x}) = P_{fa}
\end{array} \label{eq:opt}
\end{equation}
where
\begin{equation}
C_1(\bm{x}) =  \sum_{i=1}^S \! w_i(\bm{x}; \bm{\rho}_1,\ldots,\bm{\rho}_S) e(\bm{x}; \bm{\rho}_i) \label{eq:P1}.
\end{equation}
Notice that, besides depending on the specifications $\bm{\rho}_i, i=1,\ldots,S$, the weights $w_i(\cdot)$ may also depend upon  the parametric curve $f(\beta; \bm{x})$, as in \eqref{eq::P1_initial}, but through the optimization {\ed vector} $\bm{x}$. The quadratic loss
\begin{equation}
e(\bm{x}; \bm{\rho}_i) = \Big[ P(\tilde{t} \!> \! f(\beta; \bm{x})| \gamma=\gamma_i, \cos^2\!\theta = \lambda_i ) - \psi_i\Big]^2
\end{equation}
has been chosen as error function and, likewise \eqref{eq:C0f},
\begin{equation}
C_0(\bm{x}) = P(\tilde{t} >  f(\beta; \bm{x})| H_0) = 1 - \int_0^1 F_{\tilde{t}|H_0} ( f(\beta; \bm{x})) p(\beta) d\beta. \label{eq:P0}
\end{equation}
Resolution of the problem above requires to specify the parametric function $f(\beta; \bm{x})$.
In the following,  we propose a convenient structure for {\ed the latter, which leads to a convenient analytical characterization of the involved statistics}. Based on that, a novel reduced-complexity algorithm is devised for designing customized detectors in the CFAR-FP, according to the optimization problem \eqref{eq:opt}.

\section{Proposed Resolution Approach}\label{sec:proposed}

\subsection{Choice of Parametric Function $f(\beta; \bm{x})$}\label{sec:choice}

We propose the adoption of a piece-wise structure for the parametric function $f(\beta; \bm{x})$, as follows:
\begin{equation}
f_{\bm{m}}(\beta;\bm{\epsilon}) = \sum_{i=1}^p f_{m_i}(\beta; \epsilon_i) \, \Pi \left( \frac{\beta- i/p + 1/(2p)}{1/p} \right) \label{eq:structure}
\end{equation}
where for simplicity (and without loss of generality\footnote{The proposed approach can be straightforwardly extended to the case of non-uniform partition of the interval $[0,1]$, which could accommodate tighter approximations in certain regions and looser approximations in other ones, according to the $d(\beta)$ at hand.}) we have considered a uniform partition of the domain $[0,1]$ in which $\beta$ takes values, $\Pi(\cdot)$ is the rectangular window centered in the origin with unitary amplitude over $[-1/2,1/2]$ (and zero elsewhere), and  $\{f_{m_i}(\beta;\epsilon_i), i=1, \ldots, p\}$ is a set of elementary functions to be used in the approximation of $d(\beta$) according to  \eqref{eq:opt}.
Among the different alternatives,
the simplest one is to adopt a piecewise-linear approximation, i.e.,
\begin{equation}
    f_{m_i}(\beta; \epsilon_i) = m_i \beta + \epsilon_i.
\end{equation}
Notice that in general \eqref{eq:structure} depends on $2p$ parameters to be optimized, i.e., $\bm{x}=[\bm{m}^\mathsf{T} \ \bm{\epsilon}^\mathsf{T}]^\mathsf{T}$, where $\bm{m}=[m_1 \ \cdots \ m_p]^\mathsf{T}$ and $\bm{\epsilon}=[\epsilon_1 \ \cdots \ \epsilon_p]^\mathsf{T}$.

The resolution of the optimization problem \eqref{eq:opt}  will generally lead to a decision region boundary that is discontinuous.
If one is interested in having a continuous (piecewise-linear) solution, the optimization problem can be easily extended by adding the following continuity constraint:
\begin{equation}\label{eq::cont_constr}
 [\bm{A} \ \bm{b} \ \bm{C} \ \bm{d}]\begin{bmatrix} \bm{m} \\
\bm{\epsilon}
\end{bmatrix} = \bm{0}_{(p-1) \times 1} 
\end{equation}
where $\bm{A}$ is a $(p-1)$-dimensional bidiagonal matrix with diagonal elements $i/p$ {\ed ($i=1,\ldots,p-1$) and upper diagonal elements $-i/p$ ($i=1,\ldots,p-2$)}, i.e., 
\begin{equation}
\bm{A} = \left[
\begin{array}{ccccc}
\frac{1}{p} & -\frac{1}{p}  & 0 & \cdots & 0 \\
0 & \frac{2}{p} & -\frac{2}{p} & \ddots & \vdots \\
\vdots & 0 & \ddots & \ddots & 0  \\
\vdots & \vdots & \ddots & \ddots & -\frac{p-2}{p} \\
0 & \cdots  &  \cdots & 0 & \frac{p-1}{p} 
\end{array} \right] 
\end{equation}
and, analogously, $\bm{C}$ is a $(p-1)$-dimensional bidiagonal matrix with diagonal elements $1$ and upper diagonal elements $-1$, while $\bm{b}$ and $\bm{d}$ are $(p-1)$-dimensional vectors with all-zero elements except for the last one, equal to $-\frac{p-1}{p} $ and $-1$, respectively.

Unfortunately, \eqref{eq:opt} is highly non-convex and local minima can be abundant, hence the numerical resolution of this problem (with or without continuity constraint) is troublesome\footnote{Even state-of-the-art global solvers typically fail in this task. In our trials, we have used an interior-point algorithm initialized with a scatter-search mechanism for generating multiple start points \cite{InteriorPoint}. We have also tested  other state-of-the-art global optimization techniques such as direct methods (e.g., Pattern Search \cite{PatternSearch1, PatternSearch2}) and genetic algorithms \cite{GeneticAlgorithm}. The results in all our trials, under different conditions and parameter settings, invariably lead to unfeasible solutions or very poor local minima, far from being an acceptable approximation of $d(\beta)$.}.
To overcome such difficulties, we propose a different optimization strategy that seeks for a feasible solution in a limited, but sufficiently rich subset of the solution space, as detailed below.

Our starting point is to reduce the parameter space from $2p$ to $p$, by keeping fixed the $p$ parameters in $\bm{m}$  while optimizing the $p$ parameters in $\bm{\epsilon}$.
The key aspect of this choice is that on each interval $[(i-1)/p, i/p]$ the approximant function will depend upon a single parameter, i.e., the affine term $\epsilon_i$. 
In particular, we consider a discontinuous piecewise-linear approximation obtained by  juxtaposition of the best linear fitting of $d(\beta)$ in each interval $i$, as shown in the example reported in Fig.~\ref{fig:piecewise}, and better discussed later. 
\begin{figure}
\centering
\includegraphics[width=0.45\textwidth]{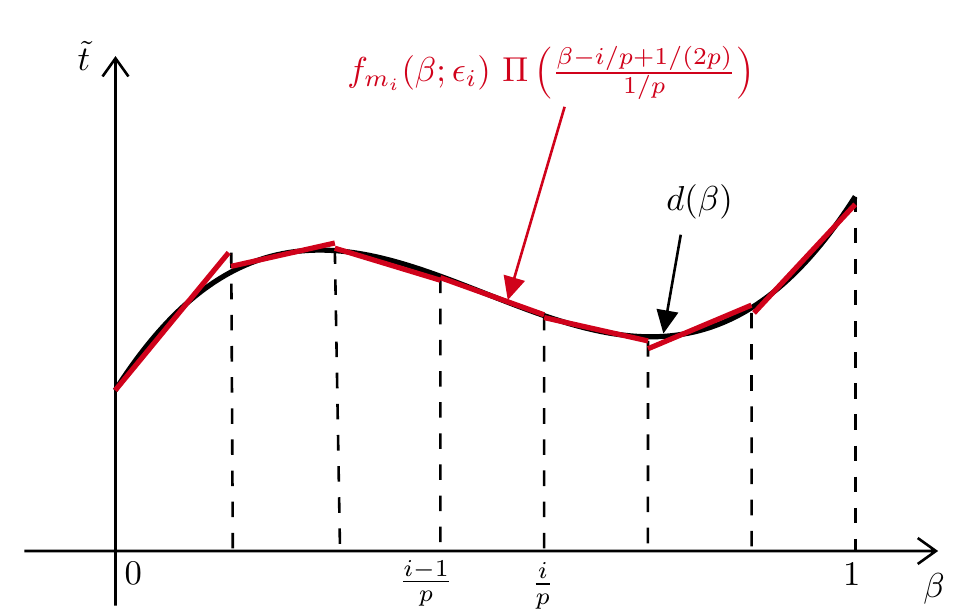}
\caption{Example of \eqref{eq:structure} for $f_{m_i}(\beta; \epsilon_i) = m_i\beta + \epsilon_i$ and $p=7$.}\label{fig:piecewise}
\end{figure}
Following this sub-optimal optimization strategy, we develop  a practical algorithm that exploits the piecewise-linear structure of $f_{\bm{m}}(\beta; \bm{\epsilon})$ to end up with a low-complexity resolution approach. In doing so, we are able to provide a satisfactory approximated solution of the (intractable) original optimization problem, optionally also with continuous boundary.

To start with, eqs. \eqref{eq:P1}-\eqref{eq:P0} can be made more explicit by exploiting the structure in \eqref{eq:structure}; specifically, for any value of SNR (including $\gamma=0$, i.e., $H_0$) and mismatch level (including $\cos^2\!\theta=1$, i.e., $H_1$ under matched conditions), we can write
\begin{align}
 P(\tilde{t} \!> \! f_{\bm{m}}(\beta; \bm{\epsilon})| \gamma, \cos^2\!\theta )& = 1 - \int_0^1 F_{\tilde{t}|\beta} ( f_{\bm{m}}(\beta;\bm{\epsilon})) p(\beta)  d\beta \nonumber \\
&= 1\! -\! \sum_{i=1}^p r_i (\epsilon_i, \gamma)
\end{align}
where each term
\begin{equation}
r_i (\epsilon_i, \gamma) = \int_{(i-1)/p}^{i/p} \!\! F_{\tilde{t}|\beta}(f_{m_i}(\beta; \epsilon_i)) p(\beta)d\beta \label{eq:r_i} 
\end{equation}
is a monotonically increasing one-dimensional function in $\epsilon_i$. Using these new expressions, the optimization problem can be finally recast as
\begin{equation}
\begin{array}{l}
\underset{\bm{\epsilon}}{\mathrm{minimize}}
 \ C_1(\bm{\epsilon}) \\
 \text{s.t. } C_0(\bm{\epsilon}) = P_{fa}
\end{array} \label{eq:final_opt}    
\end{equation}
where
\begin{equation}\label{eq::finalcost}
C_1(\bm{\epsilon}) = \sum_{k=1}^S e(\bm{\epsilon};\bm{\rho}_k)  \sqrt{ \sum_{i=1}^S\left( e(\bm{\epsilon};\bm{\rho}_i) - \frac{1}{S} \sum_{j=1}^S e(\bm{\epsilon};\bm{\rho}_j) \right)^2}
\end{equation}
with
$$
e(\bm{\epsilon};\bm{\rho}_i) = \Big[ P(\tilde{t} \!> \! f_{\bm{m}}(\beta; \bm{\epsilon})| \gamma=\gamma_i, \cos^2\!\theta = \lambda_i ) - \psi_i\Big]^2
$$
and 
\begin{equation}\label{eq::finalconstr}
C_0(\bm{\epsilon}) =    P(\tilde{t} >  f_{\bm{m}}(\beta; \bm{\epsilon})| H_0) = 1 - \sum_{i=1}^p r_i(\epsilon_i,0).
\end{equation}
The adopted cost function \eqref{eq::finalcost} is one among several possible choices, and is aimed at promoting a fairly uniform deviation of the approximated decision region boundary from the desired one, while minimizing the overall approximation error. Details about its derivation and a discussion of alternative choices are reported in Appendix~\ref{app:discussion}.

We now provide two key Propositions where the peculiar structure of $f_{\bm{m}}(\beta; \bm{\epsilon})$ is exploited together with the statistical characterization of the maximal invariant statistics $(\beta,\tilde{t})$ to derive i) a more compact formula to compute the integral in \eqref{eq:r_i} under the $H_1$ (matched/mismatched) hypothesis, which will be used to evaluate the cost function in \eqref{eq::finalcost}; ii) a closed-form solution to the integral appearing in \eqref{eq:r_i} under the $H_0$ hypothesis, which will be exploited to compute the $P_{fa}$ constraint in \eqref{eq::finalconstr}. Based on such results, we will be able to devise a novel reduced-complexity algorithm for the design of customized detectors according to problem \eqref{eq:final_opt}.

\subsection{Analytical Characterization}\label{sec:algorithm}

First recall the general characterization of $(\beta,\tilde{t})$ parameterized in $\gamma$ and $\cos^2 \theta$, which encompasses the one under $H_0$ (for $\gamma=0$) and $H_1$ under matched conditions (for $\cos^2 \theta=1$) \cite{Kelly_techrep}, see also \cite{Kelly,Kelly89,BOR-MorganClaypool,KellyTR2,Richmond}.
The variable $\tilde{t}$ given $\beta$ is ruled by a complex noncentral $\mathcal{F}$-distribution with $1$ and $K-N+1$ complex degrees of freedom and noncentrality parameter $\gamma \beta \cos^2\theta$, i.e., $\tilde{t}\sim \mathcal{CF}_{1,K-N+1}(\gamma \beta \cos^2\theta)$; $\beta$ is ruled by a complex noncentral Beta distribution with $K-N+2$ and $N-1$ complex degrees of freedom and noncentrality parameter $\gamma (1-\cos^2\theta)$, i.e., $\beta\sim \mathcal{C\beta}_{K-N+2,N-1}(\gamma (1-\cos^2\theta))$. We now provide the following results.

\begin{proposition}\label{Prop1}
Consider for $f_{\bm{m}}(\beta; \bm{\epsilon})$ in \eqref{eq:structure} the set of affine functions $f_{m_i}(\beta; \epsilon_i) = m_i\beta + \epsilon_i$,  $i=1,\ldots,p$; then,  $r_i(\epsilon_i,\gamma)$ in eq. \eqref{eq:r_i} can be more conveniently computed as
\begin{align}\label{ri_generic}
r_i(\epsilon_i,\gamma) = \int_{(i-1)/p}^{i/p} \Psi(f_{m_i}(\beta;\epsilon_i)) \Omega(\beta)d\beta 
\end{align}
where
\begin{align*}
\Psi & (f_{m_i}(\beta; \epsilon_i))  = (1+f_{m_i}(\beta; \epsilon_i))^{-(K-N+1)} \nonumber \\
&\times \sum_{\ell=1}^{K-N+1}\binom{K-N+1}{\ell} \frac{f_{m_i}(\beta;\epsilon_i)^\ell}{\Gamma(\ell)} \Gamma\left(\ell,\frac{\delta^2_F}{1+f_{m_i}(\beta; \epsilon_i)}\right)
\end{align*}

\begin{align*}
\Omega(\beta)  =  \binom{K}{N-2} & \frac{\e^{-\delta^2_\beta}\beta^{K-N+1}}{(1-\beta)^{2-N}}\\ & \times {}_1F_1(-K+N-2,N-1,\delta_\beta^2(\beta-1))
\end{align*}
with $\Gamma(a,b)$ the Euler's upper incomplete Gamma function, ${}_1F_1(a,b;z)$ the Kummer's confluent hypergeometric function, $\delta_F = \gamma\beta\cos^2 \theta$ the noncentrality parameter of the $F$-distribution, and $\delta_\beta = \gamma(1-\cos^2 \theta)$ the noncentrality parameter of the complex Beta distribution.
\end{proposition}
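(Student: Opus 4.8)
The plan is to prove Proposition~\ref{Prop1} by identifying separately the two factors appearing in the integrand of $r_i(\epsilon_i,\gamma)$ in \eqref{eq:r_i} with the closed forms $\Psi$ and $\Omega$, and then substituting. By the distributional characterization recalled just before the statement, the inner factor $F_{\tilde{t}|\beta}(f_{m_i}(\beta;\epsilon_i))$ is the CDF of a $\mathcal{CF}_{1,K-N+1}(\gamma\beta\cos^2\theta)$ variable evaluated at the affine argument $f_{m_i}(\beta;\epsilon_i)=m_i\beta+\epsilon_i$, whereas $p(\beta)$ is the pdf of the $\mathcal{C\beta}_{K-N+2,N-1}(\gamma(1-\cos^2\theta))$ variable. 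Hence it suffices to show that $F_{\tilde{t}|\beta}(\cdot)=\Psi(\cdot)$ and $p(\beta)=\Omega(\beta)$; the claimed representation \eqref{ri_generic} then follows immediately, the affine structure being merely carried through the argument of $\Psi$.

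First I would derive $\Omega$. Writing the complex noncentral Beta density as a mixture of central complex Beta densities with Poisson-type weights governed by $\delta_\beta^2=\gamma(1-\cos^2\theta)$, the series in the mixing index collapses to Kummer's function ${}_1F_1(-K+N-2,N-1,\delta_\beta^2(\beta-1))$, which here terminates into a polynomial because $-(K-N+2)$ is a negative integer; the central-Beta normalization supplies the factor $\binom{K}{N-2}$ together with the monomial $\beta^{K-N+1}(1-\beta)^{N-2}$ and the exponential tilt $\e^{-\delta_\beta^2}$. Matching these pieces reproduces $\Omega(\beta)$ exactly, and a useful check is that $\delta_\beta^2=0$ (matched conditions $\cos^2\theta=1$, or $\gamma=0$) sends ${}_1F_1$ to unity and recovers the central complex Beta pdf.

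The heart of the argument, and the step I expect to be the main obstacle, is the closed-form evaluation of the conditional F-CDF as the \emph{finite} sum $\Psi$. The plan is to use the ratio representation $\tilde{t}=S_1/S_2$ with $S_1$ a complex noncentral $\chi^2$ of one complex degree of freedom and noncentrality $\delta_F^2=\gamma\beta\cos^2\theta$, and $S_2$ an independent central complex $\chi^2$ of $K-N+1$ complex degrees of freedom, so that $F_{\tilde{t}|\beta}(x)=\int_0^\infty P(S_1\le x s)\,f_{S_2}(s)\,ds$. Inserting the Rician-type (Marcum-$Q$) CDF of $S_1$ and the Gamma density of $S_2$, the key manipulation is to expand the kernel $(1+x)^{-(K-N+1)}$ by the binomial theorem, which produces the $\binom{K-N+1}{\ell}$ coefficients and the powers $x^\ell$, and to integrate term by term so that each summand contracts into an upper incomplete Gamma $\Gamma(\ell,\delta_F^2/(1+x))$ with the $1/\Gamma(\ell)$ normalization. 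Since $K-N+1$ is a positive integer the sum terminates at $\ell=K-N+1$, which is exactly what renders the representation finite and computationally convenient. The delicate points are tracking the complex (rather than real) normalization of the $\chi^2$ and F laws throughout, and recognizing the incomplete-Gamma closed form of the resulting series; the limit $\delta_F^2\to 0$, where $\Gamma(\ell,0)=\Gamma(\ell)$ gives $\Psi(x)=1-(1+x)^{-(K-N+1)}$, provides a sanity check against the known complementary CDF $(1+x)^{-(K-N+1)}$ of Kelly's statistic under $H_0$ and confirms both the constant and the collapse of the sum.

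Finally I would substitute $F_{\tilde{t}|\beta}=\Psi$ and $p=\Omega$ back into \eqref{eq:r_i}, retaining the $\beta$-dependence of $\delta_F^2=\gamma\beta\cos^2\theta$ inside $\Psi$ and of $\delta_\beta^2=\gamma(1-\cos^2\theta)$ inside $\Omega$, to obtain \eqref{ri_generic} over $[(i-1)/p,\,i/p]$. No further simplification of the integral is asserted, so the argument concludes once the two closed forms are established.
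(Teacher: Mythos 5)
Your overall decomposition --- identify $F_{\tilde t|\beta}=\Psi$ and $p(\beta)=\Omega$, then substitute into \eqref{eq:r_i} --- is the same as the paper's, but the way you establish the two identities is genuinely different: the paper derives nothing from distributional first principles. It quotes the known \emph{finite}-sum expressions for $F_{\tilde t|\beta}$ and $p(\beta)$ from the literature and compacts them with two classical identities: $\sum_{i=0}^{k} z^i/i! = \e^{z}\,\Gamma(1+k,z)/\Gamma(1+k)$ for the inner sum of the conditional CDF (then reindexing $\ell=1+k$), and the Laguerre/Kummer connection for the sum in $p(\beta)$. Your route for $\Omega$ (Poisson mixture) can be made to work, provided the Poisson index augments the second ($N-1$) degree-of-freedom parameter; but note that the mixture then collapses to a \emph{non-terminating} ${}_1F_1(K+1;N-1;\delta_\beta^2(1-\beta))$, and you need Kummer's transformation ${}_1F_1(a;b;z)=\e^{z}\,{}_1F_1(b-a;b;-z)$ to reach the terminating polynomial with argument $\delta_\beta^2(\beta-1)$ and the tilt $\e^{-\delta_\beta^2\beta}$ --- a step your sketch silently skips.

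The genuine gap is in $\Psi$. With the orientation you chose, $F_{\tilde t|\beta}(x)=\int_0^\infty P(S_1\le xs)f_{S_2}(s)\,ds$, expanding the Marcum-$Q$ CDF of the noncentral numerator as a Poisson mixture of Erlang CDFs and integrating term by term yields
\begin{equation*}
F_{\tilde t|\beta}(x)=1-\sum_{n=0}^{\infty}\e^{-\delta_F^2}\frac{\delta_F^{2n}}{n!}\sum_{k=0}^{n}\binom{K-N+k}{k}\frac{x^{k}}{(1+x)^{K-N+k+1}},
\end{equation*}
an \emph{infinite} series in the Poisson index, whose binomial coefficients are $\binom{K-N+k}{k}$ and whose incomplete gammas (after swapping the two sums) have argument $\delta_F^2$, not $\delta_F^2/(1+x)$. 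Nowhere in this computation does ``expanding the kernel $(1+x)^{-(K-N+1)}$ by the binomial theorem'' produce the finite coefficients $\binom{K-N+1}{\ell}$: the binomial expansion of a negative integer power is itself an infinite series, and the finite coefficients in $\Psi$ are not generated that way. To reach the claimed finite sum with $\Gamma\bigl(\ell,\delta_F^2/(1+x)\bigr)$ you must either integrate in the opposite order --- the noncentral density of $S_1$ against the finite Erlang complementary CDF of $S_2$, which brings in Laguerre-type integrals $\int_0^\infty y^k \e^{-py} I_0(2\delta_F\sqrt{y})\,dy$ --- followed by a nontrivial resummation, or do what the paper does: start from the already-known finite double sum for $F_{\tilde t|\beta}$ and apply the elementary incomplete-gamma identity above to its innermost sum, which cancels the factor $\e^{-\delta_F^2/(1+x)}$ and gives $\Psi$ in two lines. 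Your limiting sanity checks are correct but cannot substitute for this missing bridge, so as written the central step of your plan would fail.
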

\begin{proof}
See Appendix \ref{app:A}, where also the following Corollary is obtained as a by-product of the proof.
\end{proof}
\begin{corollary}
For negative integer $a$ and  positive integer $b$,  ${}_1F_1(a,b;z)$ is a polynomial of degree $-a$ \cite[eq. 13.1.3]{abramowitz}, hence an alternative expression for $\Omega(\beta)$ is
\begin{align*}
\Omega(\beta) = \frac{\e^{-\delta^2_\beta}\beta^{K-N+1}}{(1-\beta)^{2-N}} L_{K-N+2}^{(N-2)}(\delta_\beta^2(\beta-1))
\end{align*}
where $L^{(\alpha)}_n(x)$ is the generalized Laguerre polynomial of order $n$ and parameter $\alpha$, here computed in $x=\delta_\beta^2(\beta-1)$.
\end{corollary}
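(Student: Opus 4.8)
The plan is to verify the stated polynomial reduction and then invoke the single standard identity that links Kummer's function to the generalized Laguerre polynomials. First I would confirm that the hypotheses of the polynomial claim are met in the present setting: the first argument of the confluent hypergeometric function appearing in Proposition~\ref{Prop1} is $a=-K+N-2=-(K-N+2)$, which is a negative integer because the standing assumption $K\geq N$ forces $K-N+2\geq 2$; the second argument is $b=N-1$, a positive integer since the validity of the complex Beta distribution (with $K-N+2$ and $N-1$ degrees of freedom) already requires $N\geq 2$. Writing the Kummer series ${}_1F_1(a,b;z)=\sum_{k\geq 0}\frac{(a)_k}{(b)_k}\frac{z^k}{k!}$, the Pochhammer factor $(a)_k=(-(K-N+2))_k$ vanishes for every $k>K-N+2$, so the series truncates and yields a polynomial of degree exactly $-a=K-N+2$, as asserted in the statement.

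Next I would apply the classical representation of the generalized Laguerre polynomial in terms of Kummer's function, namely $L_n^{(\alpha)}(x)=\binom{n+\alpha}{n}\,{}_1F_1(-n,\alpha+1;x)$ \cite[eq.~13.6.9]{abramowitz}. The key step is to read off the unique values of $n$ and $\alpha$ so that this identity reproduces exactly the hypergeometric function arising in $\Omega(\beta)$. Setting $n=K-N+2$ and $\alpha=N-2$ gives $-n=-K+N-2$ and $\alpha+1=N-1$, which match the two arguments $a$ and $b$ above, while the evaluation point $x=\delta_\beta^2(\beta-1)$ is left unchanged.

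The remaining task is purely algebraic: simplifying the prefactor. The binomial coefficient produced by the identity is $\binom{n+\alpha}{n}=\binom{K}{K-N+2}$, which by the symmetry $\binom{K}{j}=\binom{K}{K-j}$ equals $\binom{K}{N-2}$ --- precisely the combinatorial constant multiplying the confluent hypergeometric function in Proposition~\ref{Prop1}. Hence $\binom{K}{N-2}\,{}_1F_1(-K+N-2,N-1;z)=L_{K-N+2}^{(N-2)}(z)$ with $z=\delta_\beta^2(\beta-1)$, and substituting this into the $\Omega(\beta)$ obtained in the proof of Proposition~\ref{Prop1} immediately delivers the claimed Laguerre-form expression.

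I do not expect a genuine obstacle, since the result follows from a single well-tabulated identity; the only point requiring care is the index bookkeeping --- ensuring that $(n,\alpha)=(K-N+2,N-2)$ is the assignment consistent with \emph{both} the order/parameter of the Laguerre polynomial in the statement \emph{and} the arguments of Kummer's function in Proposition~\ref{Prop1}, and that the binomial symmetry is applied in the correct direction so that the prefactor $\binom{K}{N-2}$ is absorbed exactly.
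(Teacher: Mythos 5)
Your proposal is correct and takes essentially the same route as the paper: the paper's Appendix~\ref{app:A} obtains the Corollary as a by-product by invoking precisely \cite[eq.~13.6.9]{abramowitz} with the assignment $(n,\alpha)=(K-N+2,N-2)$ and absorbing the prefactor via the binomial symmetry $\binom{K}{K-N+2}=\binom{K}{N-2}$, which are exactly your key steps. The only (cosmetic) difference is directional --- the paper establishes the identity while working upward from the series expansion of $p(\beta)$, additionally using \cite[eq.~22.3.9]{abramowitz} to match the explicit Laguerre series, whereas you apply the identity directly to the ${}_1F_1$ form already displayed in Proposition~\ref{Prop1}, after verifying the negative-integer/positive-integer hypotheses.
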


\begin{proposition}\label{Prop2}
Under the $H_0$ hypothesis ($\gamma = 0$), the integral appearing in \eqref{ri_generic} can be solved in closed-form as
\begin{align}\label{eq:ri_Pfa}
r_i&(\epsilon_i,0) = F_{\beta|H_0}(i/p) - F_{\beta|H_0} ((i-1)/p) \nonumber \\
& - \frac{K!}{(K-N+1)!(N-2)!}\left[g(i/p,\epsilon_i) - g((i-1)/p,\epsilon_i) \right]
\end{align}
with $F_{\beta|H_0}(\cdot)$ denoting the CDF of the complex central Beta distribution and
\begin{align*}
&g(x,\epsilon_i) = \frac{x}{K-N+2} \left(\frac{x}{1+\epsilon_i}\right)^{K-N+1} \nonumber \\ & \times \! F_1 \! \left(\! K\! - \! N \! +\! 2,2-N,K-N+1,K-N+3; x, \frac{-m_ix}{(1+\epsilon_i)} \right)
\end{align*}
with $F_1(a,b,c,d; y,z)$ the Appell $F_1$ function of two variables.
\begin{proof}
See Appendix \ref{app:B}.
\end{proof}
\end{proposition}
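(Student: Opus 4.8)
The plan is to specialize the general integrand of Proposition~\ref{Prop1} to the $H_0$ case and then recognize the surviving integral as an Euler-type representation of the Appell $F_1$ function. Setting $\gamma=0$ annihilates both noncentrality parameters, $\delta_F=\gamma\beta\cos^2\theta=0$ and $\delta_\beta=\gamma(1-\cos^2\theta)=0$, so the two factors in \eqref{ri_generic} collapse dramatically. First I would simplify $\Psi$: since $\Gamma(\ell,0)=\Gamma(\ell)$, every ratio $\Gamma(\ell,\delta_F^2/(1+f))/\Gamma(\ell)$ becomes $1$, and the binomial theorem gives $\sum_{\ell=1}^{K-N+1}\binom{K-N+1}{\ell}f^\ell=(1+f)^{K-N+1}-1$; hence $\Psi(f)=1-(1+f)^{-(K-N+1)}$, which is precisely the central complex-$\mathcal{F}$ CDF $F_{\tilde{t}|H_0}(f)$. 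Next I would simplify $\Omega$: because ${}_1F_1(a,b;0)=1$, the confluent hypergeometric factor (equivalently, the Laguerre polynomial of the Corollary) disappears and $\Omega(\beta)$ reduces to the central complex Beta density $p(\beta)$, proportional to $\beta^{K-N+1}(1-\beta)^{N-2}$ on $[0,1]$ with normalizing constant $\tfrac{K!}{(K-N+1)!(N-2)!}$.

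With $f_{m_i}(\beta;\epsilon_i)=m_i\beta+\epsilon_i$, substituting these simplifications into \eqref{ri_generic} and splitting $\Psi=1-(1+f_{m_i})^{-(K-N+1)}$ yields two contributions. The term carrying the constant $1$ integrates the Beta density over $[(i-1)/p,\,i/p]$ and therefore produces the CDF difference $F_{\beta|H_0}(i/p)-F_{\beta|H_0}((i-1)/p)$, matching the first line of \eqref{eq:ri_Pfa}. The whole difficulty then concentrates on the remaining piece, i.e., on evaluating in closed form the partial integral $G(x)=\int_0^x \beta^{K-N+1}(1-\beta)^{N-2}(1+\epsilon_i+m_i\beta)^{-(K-N+1)}\,d\beta$, since the contribution over $[(i-1)/p,\,i/p]$ equals $G(i/p)-G((i-1)/p)$.

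To handle $G(x)$ I would make the affine change of variable $\beta=xt$, which maps the integration range to $[0,1]$ and, after factoring $(1+\epsilon_i)^{-(K-N+1)}$ out of the denominator, leaves $\int_0^1 t^{K-N+1}(1-xt)^{N-2}\bigl(1+\tfrac{m_i x}{1+\epsilon_i}t\bigr)^{-(K-N+1)}\,dt$ up to the prefactor $x^{K-N+2}(1+\epsilon_i)^{-(K-N+1)}$. The key observation --- and the step I expect to be the main obstacle --- is to match this against the single-integral Euler representation $F_1(a,b_1,b_2;c;y_1,y_2)=\frac{\Gamma(c)}{\Gamma(a)\Gamma(c-a)}\int_0^1 t^{a-1}(1-t)^{c-a-1}(1-y_1 t)^{-b_1}(1-y_2 t)^{-b_2}\,dt$. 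The nontrivial identification is $a=K-N+2$ \emph{together with} $c=a+1=K-N+3$, because only this choice forces the exponent $c-a-1=0$, so that the spurious factor $(1-t)^{c-a-1}$ equals $1$ and never appears in the integrand; the two remaining binomial factors then fix $b_1=2-N$ with $y_1=x$ (from $(1-\beta)$) and $b_2=K-N+1$ with $y_2=-m_i x/(1+\epsilon_i)$ (from the denominator), exactly the parameters of the $F_1$ appearing in $g(x,\epsilon_i)$.

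Since $c-a=1$ gives $\Gamma(c-a)=\Gamma(1)=1$, the normalizing constant reduces to $\Gamma(a)/\Gamma(c)=1/(K-N+2)$, whence $G(x)=\frac{x}{K-N+2}\bigl(\tfrac{x}{1+\epsilon_i}\bigr)^{K-N+1}F_1(\cdots)=g(x,\epsilon_i)$. Reassembling the two contributions with the Beta normalizing constant $\tfrac{K!}{(K-N+1)!(N-2)!}$ then reproduces \eqref{eq:ri_Pfa}. I would close by verifying the validity conditions $\mathrm{Re}(c)>\mathrm{Re}(a)>0$ of the Euler representation, which hold here because $a=K-N+2>0$ (as $K\ge N$) and $c-a=1>0$.
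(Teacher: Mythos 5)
Your proof is correct and follows essentially the same path as the paper's Appendix~B: the binomial-theorem collapse of the central complex-$\mathcal{F}$ CDF to $1-(1+f)^{-(K-N+1)}$, the split of $r_i(\epsilon_i,0)$ into a central-Beta CDF difference minus partial integrals $G(i/p)-G((i-1)/p)$, and the identification of those partial integrals with the Appell $F_1$ with parameters $(K-N+2,\,2-N,\,K-N+1,\,K-N+3)$ and arguments $\bigl(x,\,-m_i x/(1+\epsilon_i)\bigr)$. The only difference is presentational: where the paper cites the tabulated identity $\int_0^u x^n(1-x)^m(1+ax)^{-n}\,dx=\frac{u^{n+1}}{n+1}F_1(n+1,-m,n,n+2;u,-au)$ from Gradshteyn--Ryzhik, you rederive it via the substitution $\beta=xt$ and the Euler integral representation of $F_1$ (with the key observation $c-a-1=0$), which makes the argument self-contained but does not change the route.
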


\subsection{\ed Low-Complexity Design Procedure}

The main challenge with problem \eqref{eq:final_opt} is that the joint optimization of the $p$ parameters in $\bm{\epsilon}$ is still non-trivial as both the cost function and the constraint
encode a highly non-linear dependency on the vector $\bm{\epsilon}$. To circumvent this challenge, we pursue an alternative path that seeks for a feasible solution of \eqref{eq:final_opt} by iteratively exploring a range of piecewise-linear approximations of $d(\beta)$. The algorithm takes as inputs the desired decision region boundary $d(\beta)$, the maximum dimension $p$ of the parameter vector $\bm{\epsilon}$, and the desired $P_{fa}$, and returns in output the customized piecewise-linear detector $f_{\bm{m}}(\beta; \bm{\epsilon}^*)$ working at the preassigned $P_{fa}$.

\begin{figure}
\includegraphics[width=0.48\textwidth]{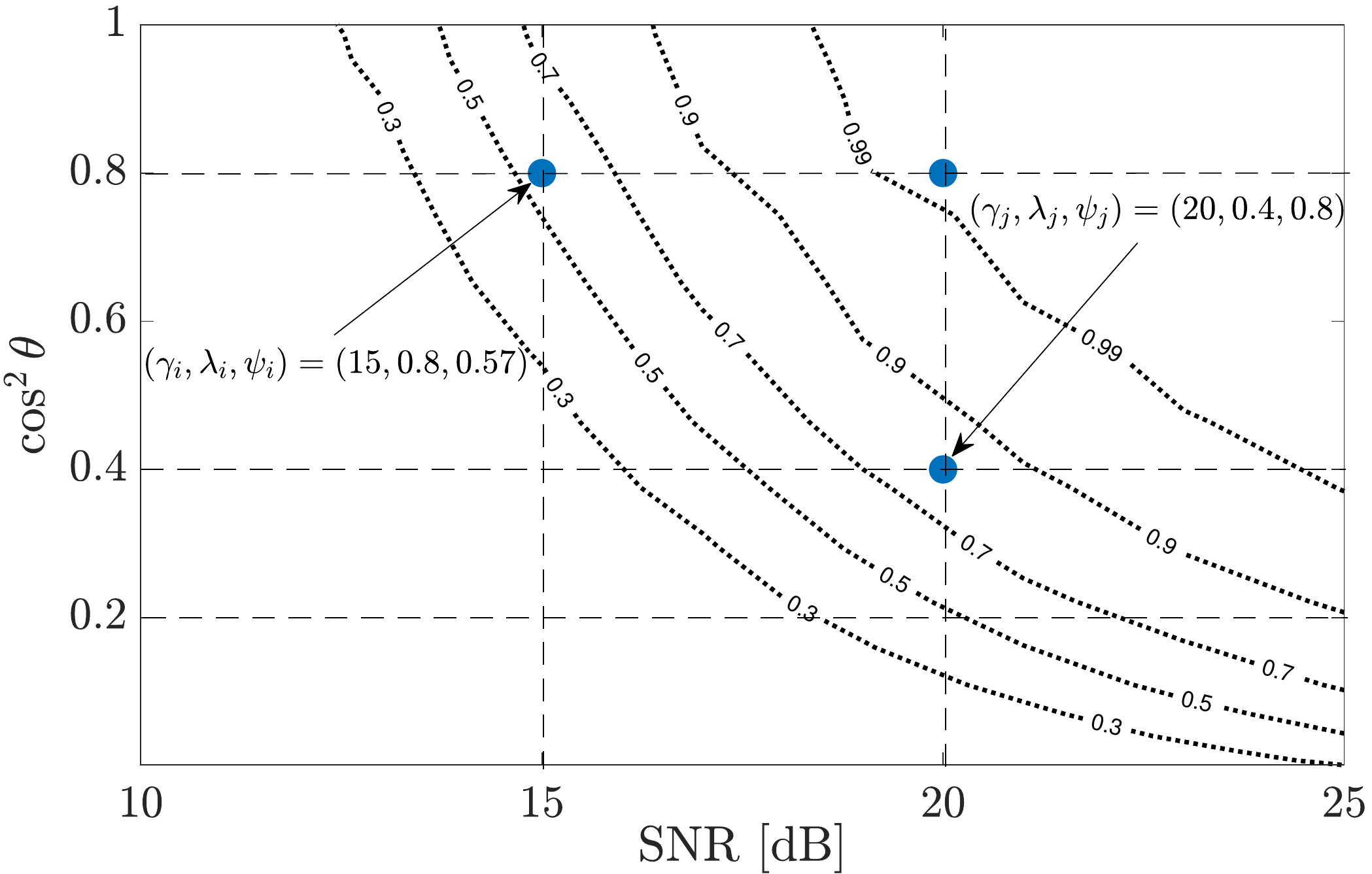}
\caption{Proposed strategy for the automatic setting of the $S$ specifications.}\label{fig:mesa}
\end{figure}

The specifications $\bm{\rho}_i, i=1,\ldots,S$, have to be chosen in order to correctly encode the desired detection performance under matched and mismatched conditions. To this end, we propose an automatic approach in which the $S$ specifications are directly obtained by sampling  the mesa-plots of $d(\beta)$, i.e., each $\bm{\rho}_i$ is set to a point lying on an iso-$P_d$ curve (with level $\psi_i$) in the SNR-$\cos^2 \theta$ plane (with coordinates $(\gamma_i,\lambda_i$)), as shown in Fig.~\ref{fig:mesa}.\footnote{\ed Notice that although the parameters in $\bm{\rho}_i$ are theoretically defined as function of the true $\bR$, their values are chosen from the curves in the mesa-plot, which do not require any knowledge of $\bR$ or other parameters.}
This choice has the advantage of not requiring any ad-hoc setting of the involved parameters. Moreover, it appears a natural way to make the design requirements in the cost function $C_1(\bm{\epsilon})$ compatible with the desired $d(\beta)$, facilitating the algorithm in finding a feasible solution $f_{\bm{m}}(\beta;\bm{\epsilon})$ that works at the preassigned $P_{fa}$ value. We will refer to such a procedure by means of the functional notation $(\bm{\rho}_1, \ldots, \bm{\rho}_S) = \mathrm{SampleSpecifications}(d(\beta),S)$.

Once the specifications have been set, the proposed algorithm performs  $\sum_{k=2}^{p} k = \frac{p(p+1)}{2} - 1$ iterations in which the function $f_{\bm{m}}(\beta;\bm{\epsilon})$ is progressively reparameterized by an increasing number of optimization variables $k$, ranging from $k=2$ up to $k = p$. According to the  results in Proposition \ref{Prop1} and \ref{Prop2}, we consider for $f_{\bm{m}}(\beta; \bm{\epsilon})$ the set of affine functions $f_{m_i}(\beta; \epsilon_i) = m_i\beta + \epsilon_i$,  $i=1,\ldots,k$, where the parameters $m_i$ and $\epsilon_i$ are initially set equal to the coefficients of the line that approximates (in least squares sense) the desired $d(\beta)$ for $\beta \in [(i-1)/k, i/k]$. In other words, the proposed method explores a range of piecewise-linear approximations of $d(\beta)$ from a coarse ($k = 2$) up to a fine scale ($k = p$).

For each $k$-dimensional parameterization of $f_{\bm{m}}(\beta; \bm{\epsilon})$, our strategy consists in iteratively changing only a single affine parameter $\epsilon_i$ at a time, while keeping fixed the remaining $k-1$ elements of the optimization vector $\bm{\epsilon}$. More specifically, we consider the decomposition of \eqref{eq::finalconstr} as
$$
C_0(\bm{\epsilon}) = 1 - r_i(\epsilon_i,0) - \sum_{j=1, j\neq i}^k r_j(\epsilon_j,0).
$$
The modified value will then correspond to the root of the equation $C_0(\bm{\epsilon}) - P_{fa} = 0$ solved with respect to $\epsilon_i$ using the result in Proposition \ref{Prop2}, that is, the algorithm attempts to modify  $\epsilon_i$ in order to exactly fulfill the $P_{fa}$ constraint. Among all the  configurations of the parametric function $f_{\bm{m}}(\beta; \bm{\epsilon})$ satisfying the $P_{fa}$ constraint, we retain as best approximation of $d(\beta)$ the  piecewise-linear detector $f_{\bm{m}}(\beta; \bm{\epsilon}^*)$ with $k^*$ parameters leading to the minimum value of the cost function $C_1(\bm{\epsilon})$ in \eqref{eq::finalcost}, the latter evaluated using the result in Proposition~\ref{Prop1}. The steps of the proposed approach are summarized in Algorithm \ref{alg}.

\begin{algorithm}\label{alg}
\DontPrintSemicolon
  \KwInput{$d(\beta)$, $p$, $P_{fa}$ }
  \KwOutput{$k^*$, $f_{\bm{m}}(\beta; \bm{\epsilon}^*)$}\vspace{0.2cm}
  ($\bm{\rho}_1$, \ldots, $\bm{\rho}_S$) = $\mathrm{SampleSpecifications}(d(\beta),S)$ \\
  \For{$k = 2$ to $p$}{
  \For{$i = 1$ to $k$}{
  Set $f_{m_i}(\beta;\epsilon_i)$ in eq.~\eqref{eq:structure}:
  $$
  f_{m_i}(\beta; \epsilon_i) \leftarrow  m_i\beta + \epsilon_i 
  $$
  with $m_i$, $\epsilon_i$ obtained from the linear fitting (regression) of $d(\beta)$ over $\beta \in [(i-1)/k, i/k]$
  }
  \For{$i = 1$ to $k$}{

  Compute $\sum_{j=1, j\neq i}^k r_j(\epsilon_j,0)$ using  \eqref{eq:ri_Pfa} \;
  
  Solve $C_0(\bm{\epsilon}) - P_{fa} = 0$ wrt $\epsilon_i$:
  $$
  \epsilon_i \leftarrow r_i^{-1} \Big(1  - P_{fa} - \sum_{j=1, j\neq i}^k r_j (\epsilon_j,0)\Big)
  $$
  
  Evaluate $C_1(\bm{\epsilon})$ in \eqref{eq::finalcost} using \eqref{ri_generic}

  }
  }
  Select $[k^*, f_{\bm{m}}(\beta; \bm{\epsilon}^*)]$ providing minimum $C_1(\bm{\epsilon})$
  
\caption{Proposed reduced-complexity algorithm}
\end{algorithm}

Intuitively, the proposed approach consists in deforming only a small portion of the desired decision region boundary $d(\beta)$, but at the same time considering a range of piecewise-linear approximations from coarse to fine scale. This captures the intrinsic trade-off between goodness of the approximation (while fulfilling exactly  $P_{fa}$)  and deviation from the desired performance, the latter expressed through the specifications encoded in the cost function. 

\begin{algorithm}\label{alg_continuous}
\DontPrintSemicolon
  \KwInput{$k^*$, $f_{\bm{m}}(\beta; \bm{\epsilon}^*)$}
  \KwOutput{Continuous version of $f_{\bm{m}}(\beta; \bm{\epsilon}^*)$}\vspace{0.2cm}
  
  Initialize $a_0 \leftarrow  \epsilon_1$, $a_{k^*} \leftarrow m_{k^*} + \epsilon_{k^*}$  
  
  \For{$i = 1$ to $k^*-1$}{
  Compute mid-point $a_i \leftarrow (m_{i} \frac{i}{k^*} + \epsilon_{i} + m_{i+1} \frac{i}{k^*} + \epsilon_{i+1})/2$\;
  
  Compute $\delta_i \leftarrow a_{i}-a_{i-1}$
  
  Update $m_i \leftarrow k^* \delta_i$
  
  Update $\epsilon_i \leftarrow -i \delta_i + a_i$ 
  
  }

Update $\epsilon^*_i \leftarrow \epsilon^*_i + \epsilon, \forall i$, with $\epsilon\in\R$ such that $C_0(\bm{\epsilon}^*) - P_{fa} = 0$ is satisfied

\caption{Refinement stage for obtaining a continuous decision region boundary}
\end{algorithm}

The proposed algorithm yields, by construction, a (mildly) discontinuous decision region boundary.
If one is interested in enforcing an exactly continuous boundary, a further adjustment can be performed, as summarized in Algorithm~\ref{alg_continuous}. Each edge of the partition described by the $k^*$ segments returned by Algorithm~\ref{alg} can be easily made continuous by joining  two adjacent segments in their middle point $a_i$
, given by 
\begin{equation}
 a_i = \left(m_{i} \frac{i}{k^*} + \epsilon_{i} + m_{i+1} \frac{i}{k^*} + \epsilon_{i+1}\right)/2
\end{equation}
for $i=1,\ldots, k^*-1$.
The two extreme points corresponding to $\beta = 0$ and $\beta = 1$ are instead kept fixed to their values, that is, $a_0 = \epsilon_1$ and $a_{k^*} = m_{k^*}+\epsilon_{k^*}$.
The resulting continuity-adjusted boundary will have, in each segment, parameters $m_i$ and $\epsilon_i$ modified according to the equation of the straight line passing through the points $((i-1)/k^*,a_{i-1})$ and $(i/k^*,a_i)$, i.e.,
\begin{equation}
m_i = \frac{a_{i}-a_{i-1}}{1/k^*} = k^*(a_i-a_{i-1})
\end{equation}
and 
 \begin{equation}
\epsilon_i = \frac{i/k^* a_{i-1} - (i-1)/k^* a_i}{1/k^*} = i (a_{i-1}-a_i) + a_i
\end{equation} 
for $i=1,\ldots, k^*$, so returning a continuous piecewise-linear version of 
$f_{\bm{m}}(\beta; \bm{\epsilon}^*)$.
Clearly, this refinement stage will (slightly) violate the $P_{fa}$ constraint; however, such a deviation is minor and can be safely recovered by a final vertical shift of the whole curve, until the constraint is exactly satisfied (step 8 in Algorithm~\ref{alg_continuous}). Results in the next Sec. \ref{sec::NumResults} will show that this procedure, given its minimal impact, does not produce any appreciable performance degradation.

For completeness, in Algorithm~\ref{alg2} we explicitly report the decision rule of the piecewise-linear customized detector $f_{\bm{m}}(\beta; \bm{\epsilon}^*)$ for a given realization of the maximal invariant statistics $(\beta,\tilde{t})$. {\ed As for the parameters $(m_\ell, \epsilon^*_\ell)$, they correspond to a specific pair selected among the $k^*$ pairs $(m_i, \epsilon^*_i)$, $i=1,\ldots,k^*$, that constitute $f_{\bm{m}} (\beta; \bm{\epsilon}^*)$.}
The detector simply finds the specific interval $[(\ell-1)/k^*,\ell/k^*]$ in which the $\beta$ statistic falls, and then uses the corresponding parameters $(m_\ell$, $\epsilon^*_\ell)$ to test whether the $\tilde{t}$ statistic exceeds the corresponding threshold $\eta_\ell = m_\ell\beta + \epsilon^*_\ell$ (decide for $H_1$) or not (decide for $H_0$). A convenient interpretation of the detection rule can be visualized in the CFAR-FP as testing whether the $\tilde{t}$ statistic falls above or below the line with parameters $(m_\ell$, $\epsilon^*_\ell)$, which represents the portion of the decision region boundary to be considered for that specific realization of $\beta$.

\begin{algorithm}\label{alg2}
\DontPrintSemicolon
  \KwInput{$\tilde{t}$, $\beta$, $k^*$, $f_{\bm{m}}(\beta; \bm{\epsilon}^*)$ }
  \KwOutput{Decision for $H_0$ or $H_1$}\vspace{0.2cm}
Find index $\ell$ such that $\beta \in [(\ell-1)/k^*,\ell/k^*]$\;
Detection rule:
$$
\tilde{t}  \test m_\ell\beta + \epsilon^*_\ell
$$
\caption{Detection rule of the customized detector}
\end{algorithm}

\section{Performance Assessment}\label{sec::NumResults}
In this section, we assess the performance of the two examples of radar detectors discussed in Sec.~\ref{sec::2}, whose design is obtained through the methodology proposed in Sec.~\ref{sec:proposed}. {\ed The analysis is performed on both simulated and real data.}
The design is conducted assuming  a maximum number of segments for  partitioning the domain $\beta \in[0, 1]$ set to $p = 16$.

Thresholds are set by Monte Carlo simulation with $100/P_{fa}$ independent trials.
For the target simulation, we assume $\bv=[1\  \e^{j2\pi f_d} \ \cdots \ \e^{j2\pi (N-1)f_d} ]^\mathsf{T}$ with a normalized Doppler frequency $f_d=0.08$ (a small value such that the target competes with low pass clutter). To model a mismatched target, we define the actual steering vector $\bp$ as $\bv$ but with $f_d+\delta_f$ and $\delta_f$ varying in order to obtain different levels of mismatches.

\subsection{Performance of the Double-well Detector}\label{sec::double_well}

{\ed
We start by analyzing the first example of detector design presented in Sec.~\ref{sec::2},  labeled ``double-well''.  
We consider $N=16$, $K=32$, and a  desired $P_{fa} = 10^{-4}$. 
As to $\bC$, we consider the sum of a Gaussian-shaped clutter  and white (thermal) noise 10 dB weaker, i.e.,
$\bC =  \bR_c + \sigma_n^2 \bI_N$ with  the $(m_1,m_2)$th element of the matrix $\bR_c$ given by
$[\bR_c]_{m_1,m_2} = \exp\{- 2\pi^2\sigma_f^2(m_1-m_2)^2\}$ and $\sigma_f \approx 0.051$ (corresponding to a one-lag correlation coefficient of the clutter component equal to $0.95$).
Finally, we consider $10^3$ independent trials to compute the $P_ d$s. }

The  $d(\beta)$ of Sec.~\ref{sec::2} has a $P_{fa} \approx 10^{-3}$, which is about an order of magnitude greater than the desired $P_{fa}$. The automatic settings of the specifications is carried out by sampling the SNR-$\cos^2 \theta$ plane of the desired $d(\beta)$ mesa-plot over four uniformly spaced values of the mismatch, namely $\lambda_i \in \{1, 0.75, 0.5, 0.25\}$, and over four different values of SNR $\gamma_i \in \{8, 10, 15, 20\}$, for a total of $S = 16$ specifications.  

\begin{figure}
\centering
\includegraphics[width=0.4\textwidth]{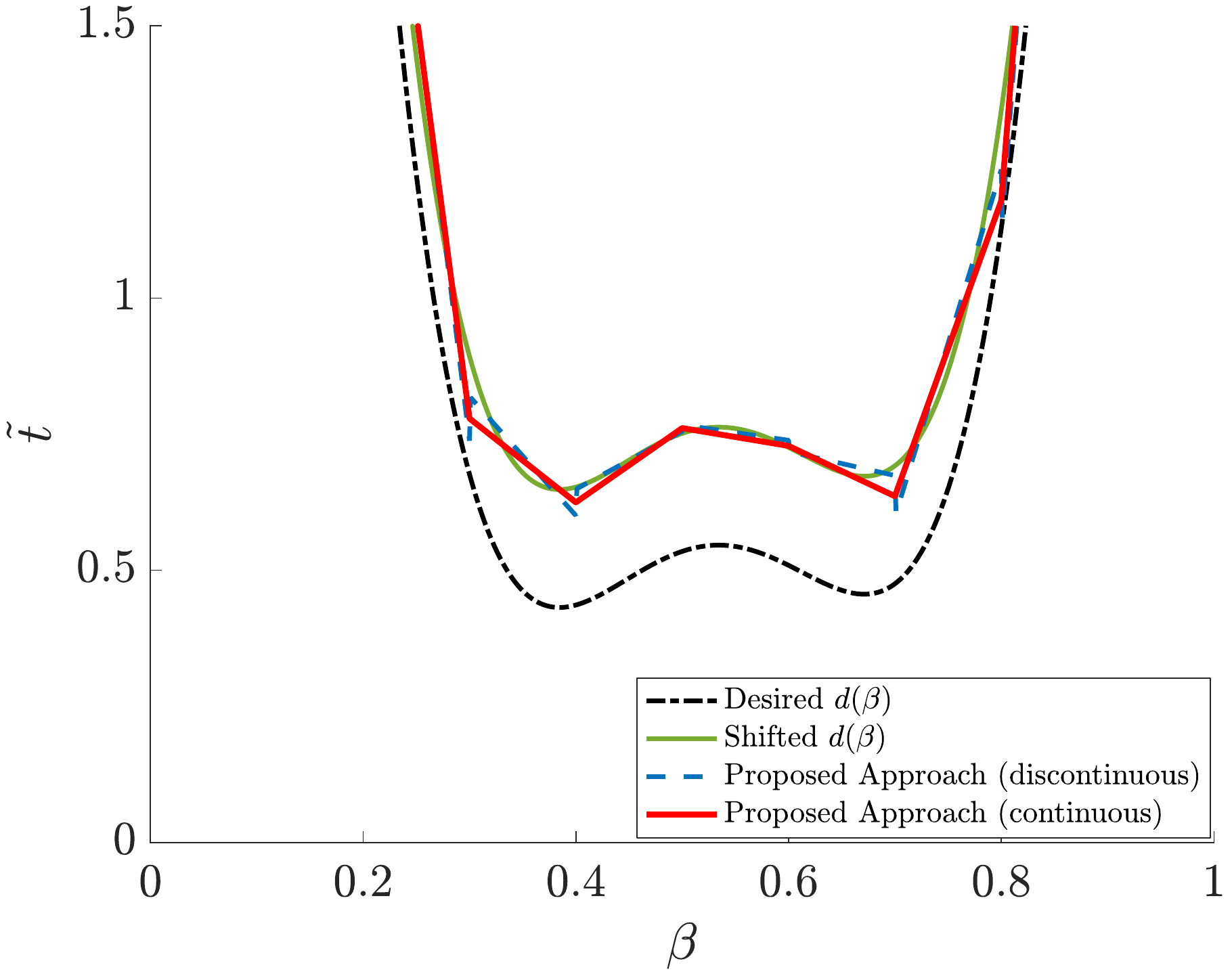}
\caption{CFAR-FP for the case of the double-well detector.  }\label{fig:CFAR-FP_doublewell}
\end{figure}

\begin{figure}
\centering
\includegraphics[width=0.33\textwidth]{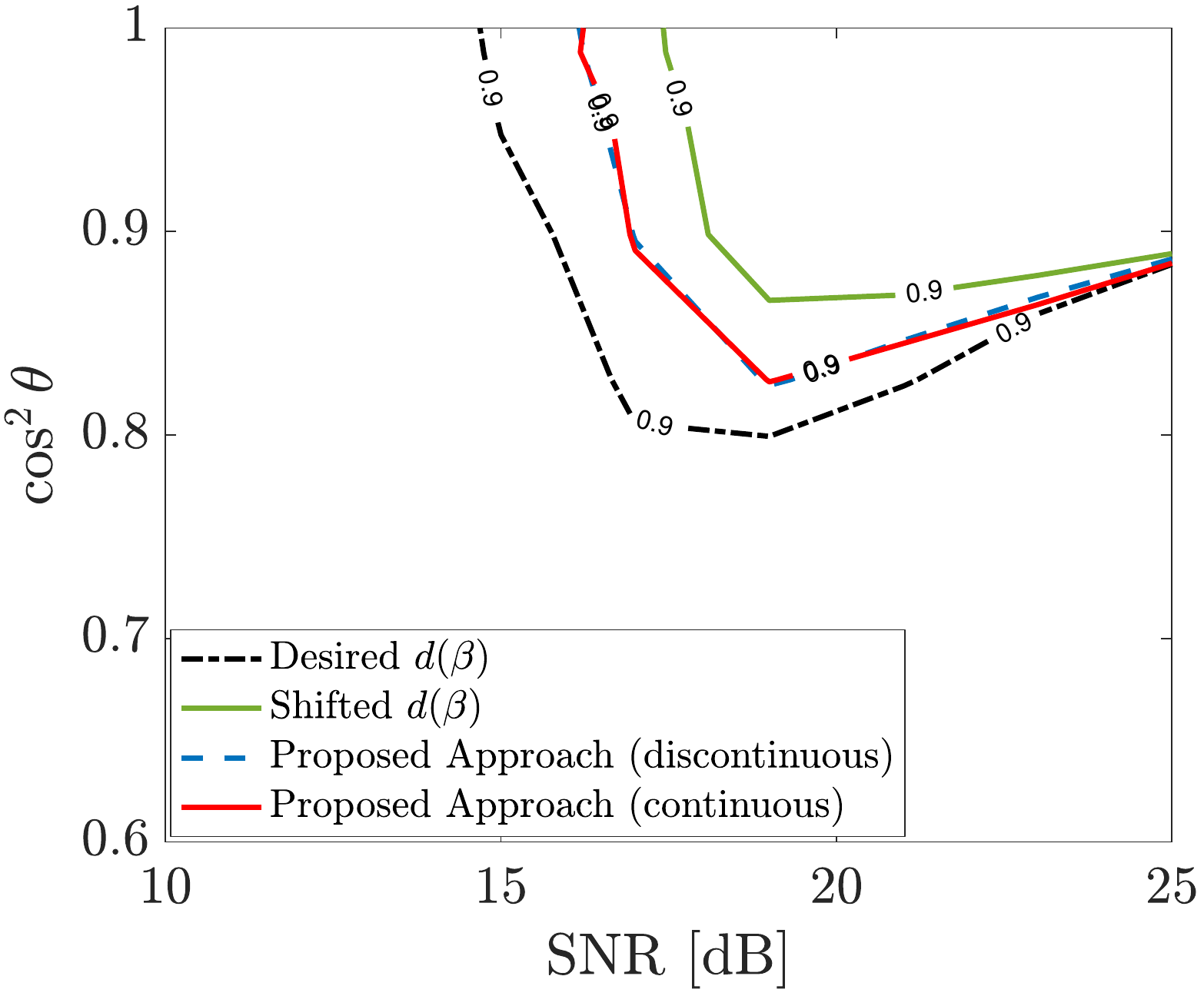}
\caption{Detection performance of the desired $d(\beta)$ in comparison to the shifted $d(\beta)$ and to the proposed approach (with either continuous or discontinuous decision region boundary) at $P_d = 0.9$.  }\label{fig:Pd09_double}
\end{figure}

\subsubsection{\ed Analysis of the decision region boundary approximation}
In Fig.~\ref{fig:CFAR-FP_doublewell}, we depict the decision region boundary $f_{\bm{m}}(\beta; \bm{\epsilon}^*)$ of the double-well detector obtained through the proposed approach, which for this case returned a $f_{\bm{m}}(\beta; \bm{\epsilon}^*)$ with $k^* = 10$ segments. For comparison, the  decision  region  boundary  of the detector obtained by simply shifting $d(\beta)$ to match the preassigned $P_{fa}$ is reported, labeled as ``Shifted $d(\beta)$'' for brevity. As it can be noticed, the proposed approach provides a decision region boundary which is close to that of the shifted $d(\beta)$, for both the discontinuous and continuous versions. 
Indeed, the discontinuity gaps at the junctions of the different segments are very limited, so if one is interested in having a continuous boundary the solution $f_{\bm{m}}(\beta; \bm{\epsilon}^*)$ provided by Algorithm~\ref{alg} can be safely made continuous through the refinement stage described in Algorithm~\ref{alg_continuous}, without compromising the performance of the design.
The latter  remarkable fact is not apparent from Fig.~\ref{fig:CFAR-FP_doublewell} but can be easily appreciated by looking at the mesa-plots of the different detectors, as discussed below.

\subsubsection{\ed Comparison between continuous and discontinuous solutions}
In Fig.~\ref{fig:Pd09_double}, we report the detection performance of the proposed double-well detector in the SNR-$\cos^2 \theta$ plane for a single level of $P_d = 0.9$, also in comparison with the desired $d(\beta)$ and with the shifted $d(\beta)$. 
It can be seen that the proposed double-well detector follows more closely the behavior of the desired $d(\beta)$, with an evident gap compared to the performance of the shifted $d(\beta)$. This confirms the validity of the proposed algorithm: by exploring a range of $k$-segment piecewise-linear approximations of the desired $d(\beta)$ from a coarse ($k =2$) up to a fine scale ($k = p=16$), our design approach is able to provide a satisfactory trade-off between satisfaction of the $P_{fa}$ constraint and deviation from the desired performance in terms of $P_d$ under matched and mismatched conditions.
From Fig.~\ref{fig:Pd09_double} it also emerges that the minimal changes required to make the decision region boundary $f_{\bm{m}}(\beta; \bm{\epsilon}^*)$ continuous practically lead to zero deviations from the detection performance of the discontinuous case; therefore, in the following we will no longer consider such a distinction.
The competitor ``shifted $d(\beta)$'' provides instead a much worse approximation of the desired $d(\beta)$; this may appear counterintuitive, since in Fig.~\ref{fig:CFAR-FP_doublewell} the decision region boundaries of the three detectors look quite close to each other. But this is exactly the motivation of this work, as discussed in Sec.~\ref{sec:background}: given the highly-nonlinear mapping between the maximal invariant statistics (CFAR-FP) and the corresponding detection performance under matched and mismatched conditions (mesa-plots), it is challenging to adjust the desired $d(\beta)$ so that the $P_{fa}$ constraint can be fulfilled without jeopardizing the design, as far as possible.

\subsubsection{\ed Analysis of the deviation from the desired behavior}
It is worth remarking that the desired $d(\beta)$ do not share the same $P_{fa}$ of the other detectors ($d(\beta)$ is working at a $P_{fa}$ which is about an order of magnitude greater), hence it cannot be strictly considered as a benchmark for the desired $P_d$, but only for its behavior, which should be approximated as closely as possible. For a better assessment, we consider as metric the area of the planar region delimited by the iso-$P_d$ curves of a given  detector and that of the $d(\beta)$ in the SNR-$\cos^2 \theta$ plane, which measures the level of vicinity between such curves. In fact, the smaller the gap from the iso-$P_d$ curves of $d(\beta)$, the better the approximation of the desired behavior under both matched and mismatched conditions. We will refer to such a metric as ``area between iso-$P_d$ curves'' (AbI). 
\begin{figure}
\includegraphics[width=0.48\textwidth]{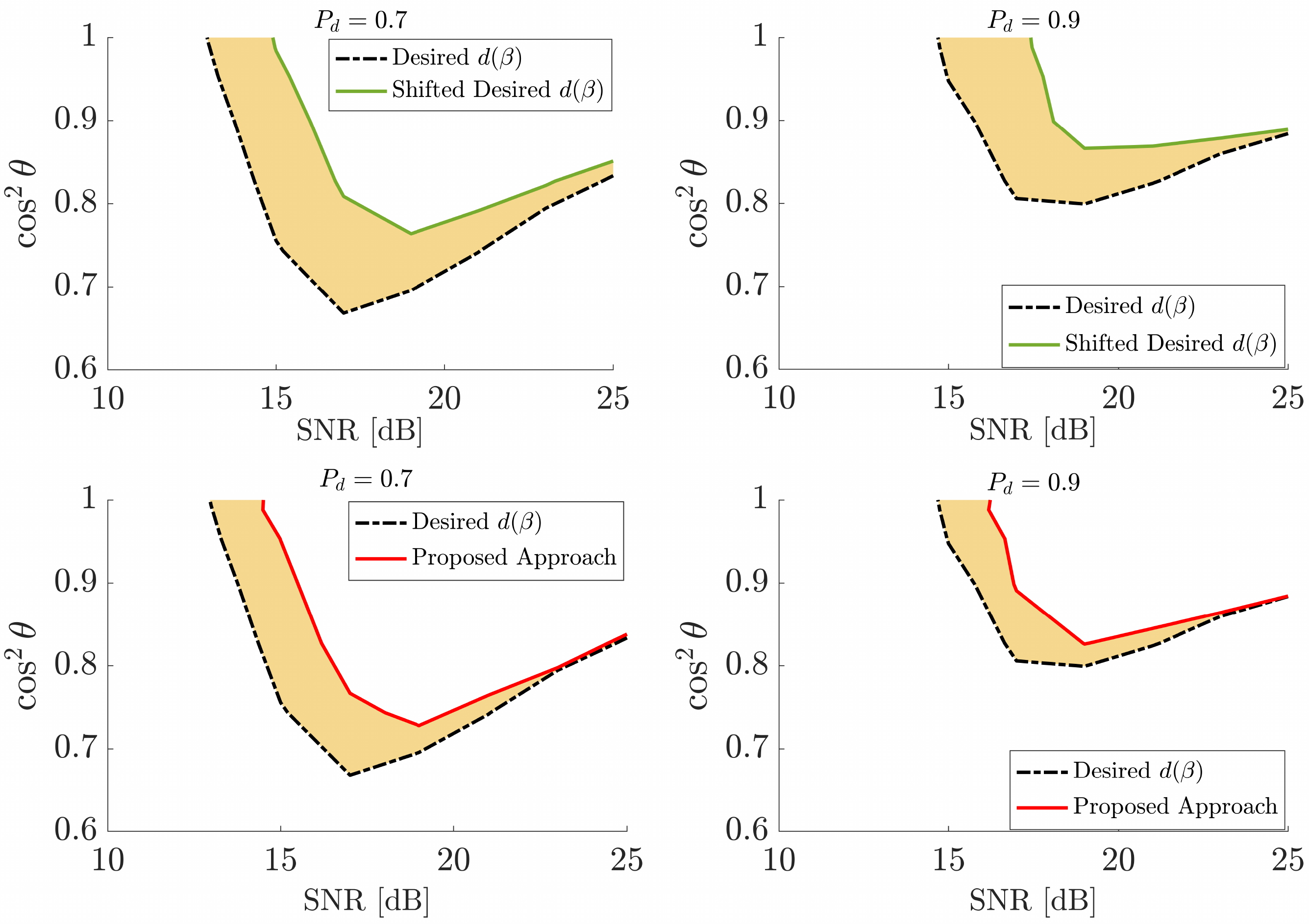}
\caption{Areas between iso-$P_d$ curves (AbI) for the proposed approach, in comparison with the shifted $d(\beta)$ detector.  }\label{fig:Comp_Pd_areas_double}
\end{figure}
\begin{figure}
\includegraphics[width=0.45\textwidth]{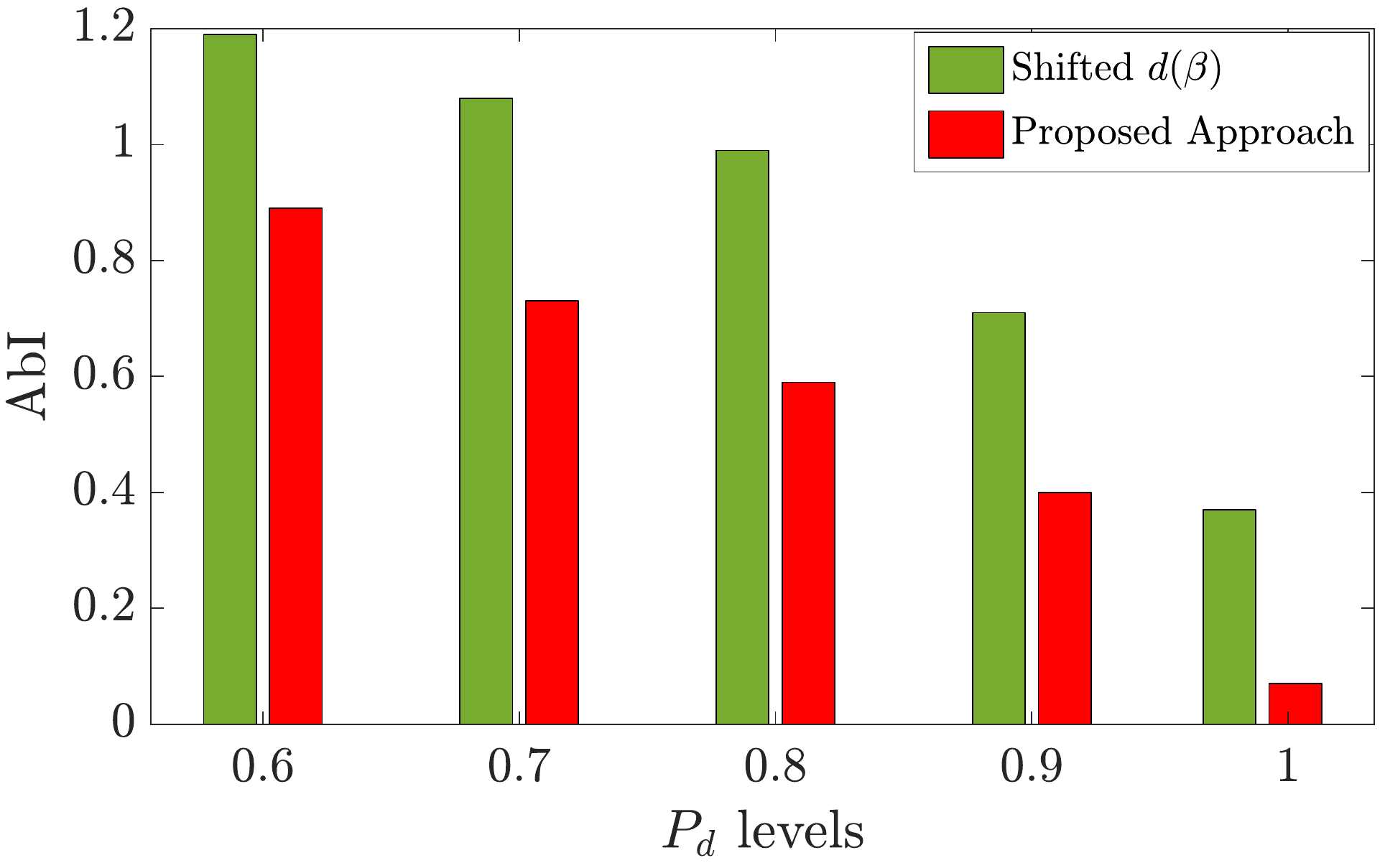}
\caption{Areas between iso-$P_d$ curves (AbI) of the double-well detectors as a function of the $P_d$ levels.  }\label{fig:Bars_areas_double}
\end{figure}

\begin{figure*}
\centering
\includegraphics[width=0.60\textwidth]{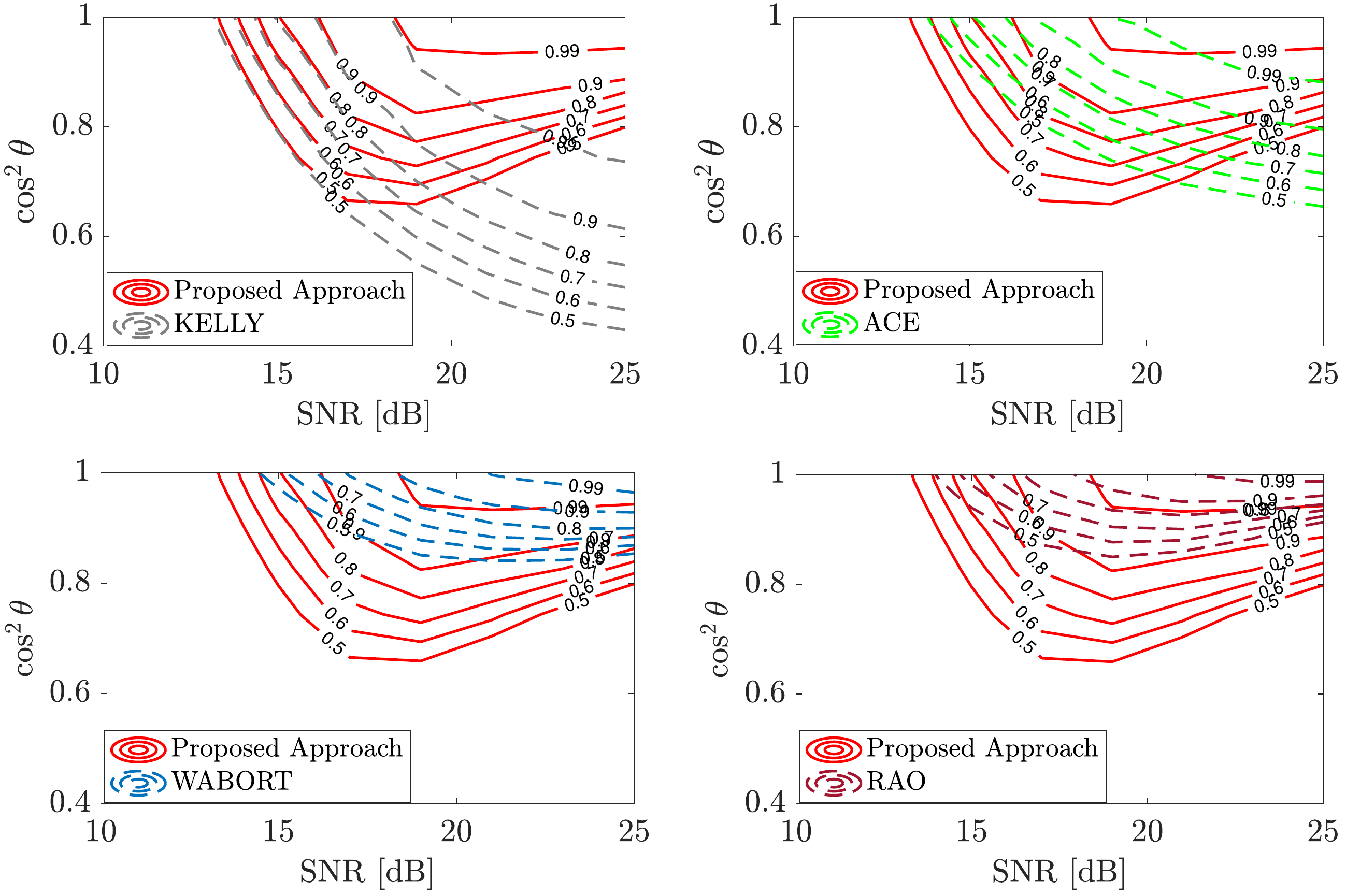}
\caption{Mesa plots of the double-well detector compared to Kelly, ACE, WABORT and RAO detectors.}\label{fig:Mesa_double}
\end{figure*}
In Fig.~\ref{fig:Comp_Pd_areas_double} we highlight the AbI of the proposed detector in comparison with the AbI of the shifted $d(\beta)$, for two different levels of $P_d = \{ 0.7, 0.9\}$.  Results
demonstrate that the double-well detector provides a better approximation of the desired $d(\beta)$, being its corresponding areas visibly smaller than those of the shifted $d(\beta)$ detector. More precisely,  Fig.~\ref{fig:Bars_areas_double} shows the exact values of the AbI as a function of the $P_d$ levels. Remarkably, the  proposed  double-well  detector  satisfactorily  follows  the   detection performance of the desired $d(\beta)$ for all the considered $P_d$ levels, with an approximation error that tends to decrease for higher values of the $P_d$.

\subsubsection{\ed Analysis of the detection performance}
In Fig.~\ref{fig:Mesa_double}, we compare the double-well detector against state-of-the-art detectors. Since the former customized detector aims at rejecting mismatched signals while preserving high detection power under matched conditions,
we have included as relevant competitors the well-known Kelly's detector, which is taken as a reference for the performance under matched conditions, as well as the ACE, WABORT and RAO detectors, which are instead taken as benchmarks for
the performance under mismatched conditions. It is interesting to observe that the double-well detector guarantees almost the same $P_d$ of Kelly’s detector under matched conditions, while providing
much higher levels of selectivity. Furthermore, it is much more powerful than ACE and exhibits
higher rejection capabilities for large SNR values. It is also remarkable to notice that, compared to the very selective WABORT and RAO detectors, the double-well detector does not experience any significant $P_d$ loss under matched conditions, while still preserving satisfactory rejection capabilities.

\subsection{Performance of the Combined AMF-ROB Detector }\label{perf::AMFROB}

\begin{figure}
\includegraphics[width=0.4\textwidth]{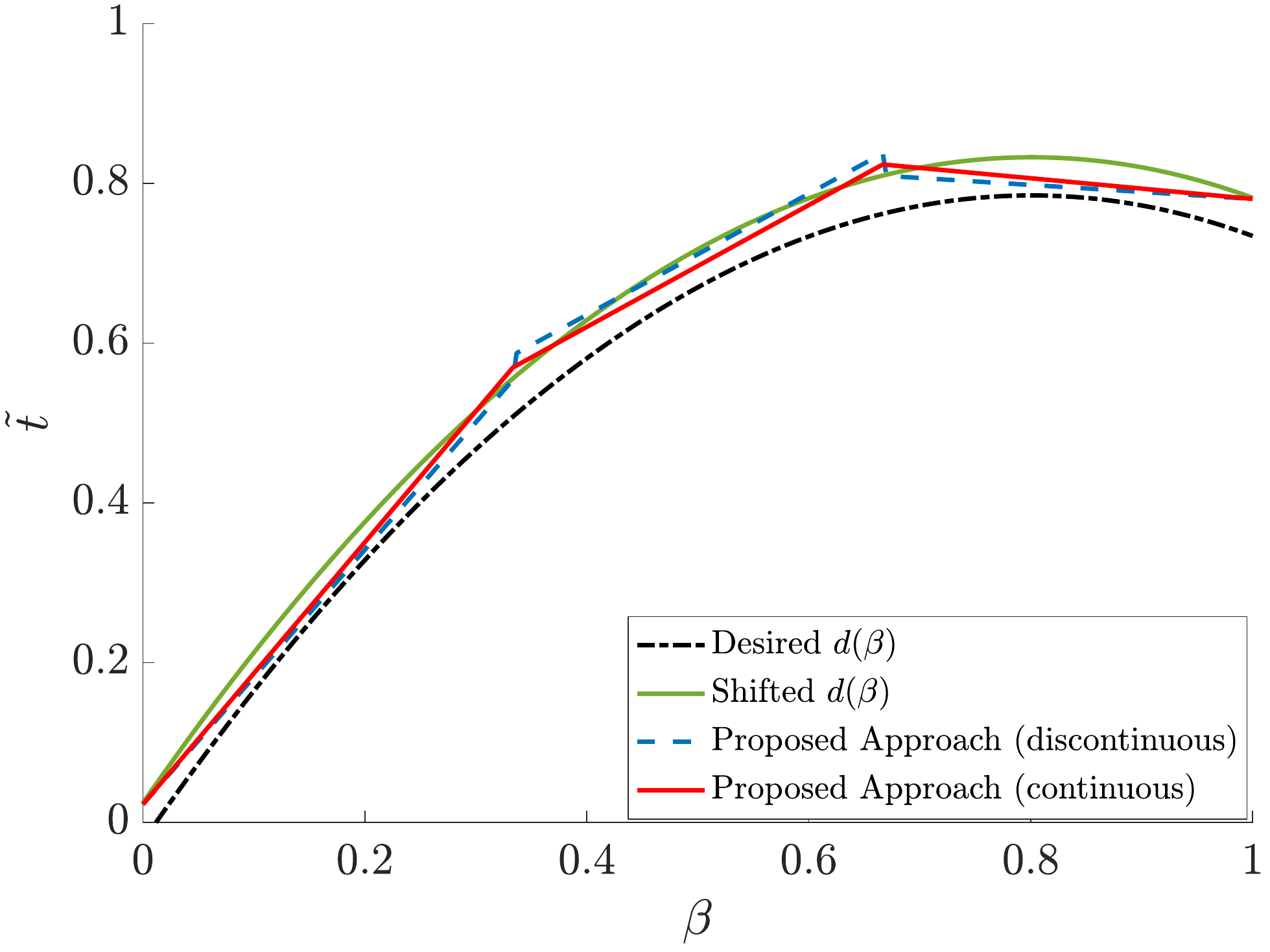}
\caption{CFAR-FP for the case of the combined AMF-ROB detector. }\label{fig:CFAR-FP_combined}
\end{figure}

For further illustration, we analyze also the second example of detector design presented in Sec.~\ref{sec::2}, namely the case in which the desired $d(\beta)$ is obtained by combining the decision region boundaries of the two well-known detectors AMF and ROB. {\ed For the analysis, we consider the same parameters as in Sec.~\ref{sec::double_well}.} Intuitively, this scenario appears to be more favorable since the desired $d(\beta)$ exhibits a $P_{fa} = 2.6 \cdot 10^{-4}$, which is already quite close to the final desired $P_{fa}$.  The specifications $\bm{\rho}_i$s are automatically set by sampling the SNR-$\cos^2 \theta$ plane of the desired $d(\beta)$ mesa plots at the same coordinates of the previous example in Sec.~\ref{sec::double_well}.

\begin{figure}
\includegraphics[width=0.43\textwidth]{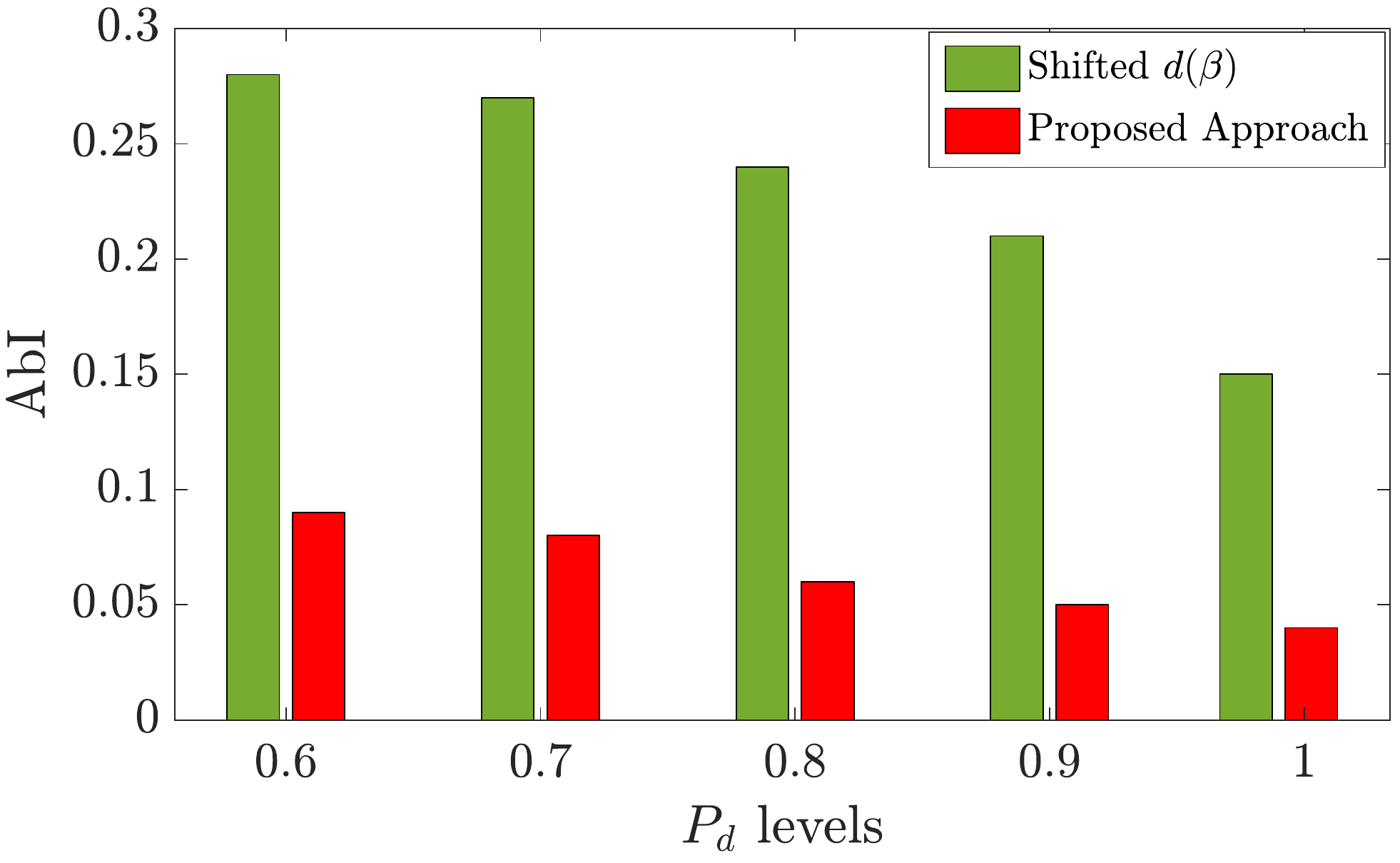}
\caption{Areas between iso-$P_d$ curves (AbI) of the combined AMF-ROB detectors as a function of the $P_d$ levels.  }\label{fig:Bars_areas_combined}
\end{figure}
\subsubsection{\ed Analysis of the decision region boundary approximation}
Fig.~\ref{fig:CFAR-FP_combined} depicts the decision region boundary of the combined detector obtained through the proposed reduced-complexity algorithm, in comparison with the desired $d(\beta)$ and with the decision region boundary of the shifted $d(\beta)$. Among  the explored configurations of $k \in [2,16]$, the best value of the objective function is achieved for $k^* = 3$, namely the configuration in which the decision region boundary $f_{\bm{m}}(\beta; \bm{\epsilon}^*)$ consists of a juxtaposition of three linear segments approximating the positive-slope line of AMF for the lower range of $\beta$ and attaining ROB's behavior in the upper range. 
\subsubsection{\ed Comparison between continuous and discontinuous solutions} Also in this case, the discontinuities in $f_{\bm{m}}(\beta; \bm{\epsilon}^*)$ are very limited, as confirmed by the continuous version of the decision region boundary, which practically coincides with its  discontinuous version.

\begin{figure*}
\centering
\includegraphics[width=0.60\textwidth]{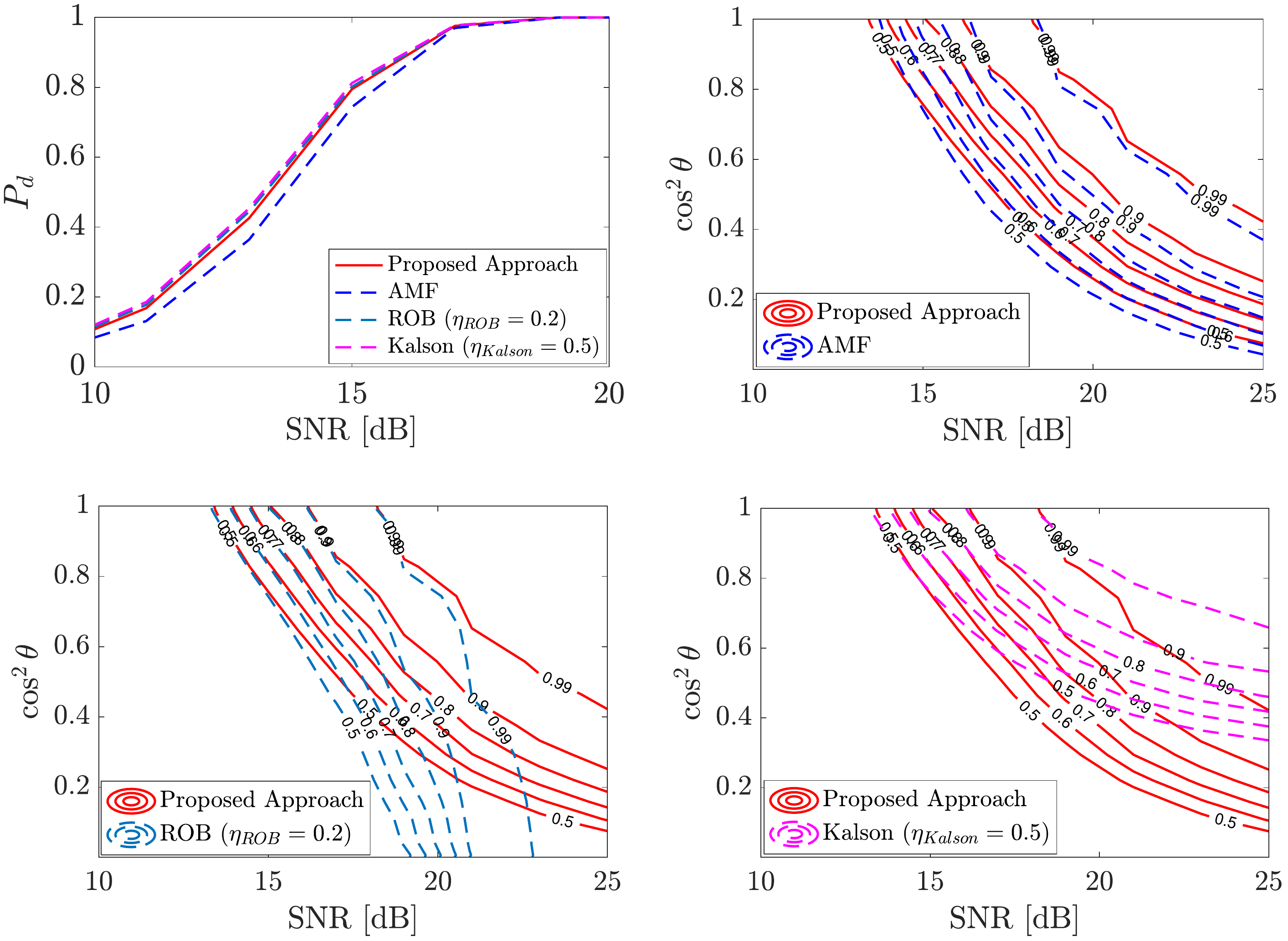}
\caption{Mesa plots of the combined AMF-ROB detector compared to the AMF, ROB, and Kalson detectors. }\label{fig:comp_robusti}
\end{figure*}
Notice that, as for the previous case of the double-well detector, the  decision region boundaries of the different detectors are close to each other, and in this case also closer to the desired $d(\beta)$ since the $P_{fa}$ of the latter is already near the chosen design value. However, again, a small difference in the boundary can produce a non-negligible impact on the detection performance.
\subsubsection{\ed Analysis of the deviation from the desired behavior} Specifically, to assess the adherence of the combined detector to the behavior of the desired $d(\beta)$, in Fig.~\ref{fig:Bars_areas_combined} we report the AbI values as a function of the $P_d$ levels. Results demonstrate that the proposed approach is able to provide a more accurate approximation of the desired $d(\beta)$ compared to the shifted $d(\beta)$, even in this more favorable scenario in which the $P_{fa}$ of the desired $d(\beta)$ is very close to the preassigned one. As a whole, we can conclude that the use of the proposed methodology can yield closer approximations of a desired detector $d(\beta)$ for a prefixed $P_{fa}$ compared to the mere shift of $d(\beta)$ itself, regardless of the extent of the gap between initial and desired $P_{fa}$ values.

\subsubsection{\ed Analysis of the detection performance} For completeness, we finally report in Fig.~\ref{fig:comp_robusti} the performance of the proposed combined detector, in comparison to the performance of the AMF, ROB, and Kalson detectors. It is worth noting that the proposed  detector is able to combine the satisfactory robustness of the AMF with the high detection power of the ROB under matched conditions (which is practically the same as Kelly's detector). Specifically, it is as powerful as both ROB and Kalson under matched conditions, while guaranteeing at the same time the level of robustness of AMF under mismatched conditions, which is  in between ROB and Kalson.

\subsection{\ed Evaluation on Real Data}

{\ed
To corroborate the above results, we have carried out a performance evaluation on real radar measurements by considering the L-band land clutter data collected by the Phase One radar at the Katahdin Hill site, MIT Lincoln Laboratory. We used the dataset contained in the \emph{``H067038.3"} file, which is composed of 30720 temporal returns from 76 range cells with HH polarization \cite{PhaseOne1,PhaseOne2}. Given that the total number of real clutter measurements is not sufficient to guarantee a number of snapshots matching the $100/P_{fa}$ rule for the value of $P_{fa} = 10^{-4}$ assumed in the simulation analyses, we  readjusted the design of the two proposed detectors by downscaling it to $N = 4$, $K = 8$, and a desired $P_{fa} = 10^{-3}$. All the remaining parameters are instead kept the same. 

For both the double-well and combined AMF-ROB CFAR detectors, the decision region boundaries obtained through the proposed low-complexity design procedure
show a very good match with the desired ones (figures omitted for conciseness), confirming the general validity of our approach, which can flexibly adapt to a new set of parameters. In the following we report the analysis of the corresponding performance.

\begin{figure}
		\centering
		\includegraphics[width=\columnwidth]{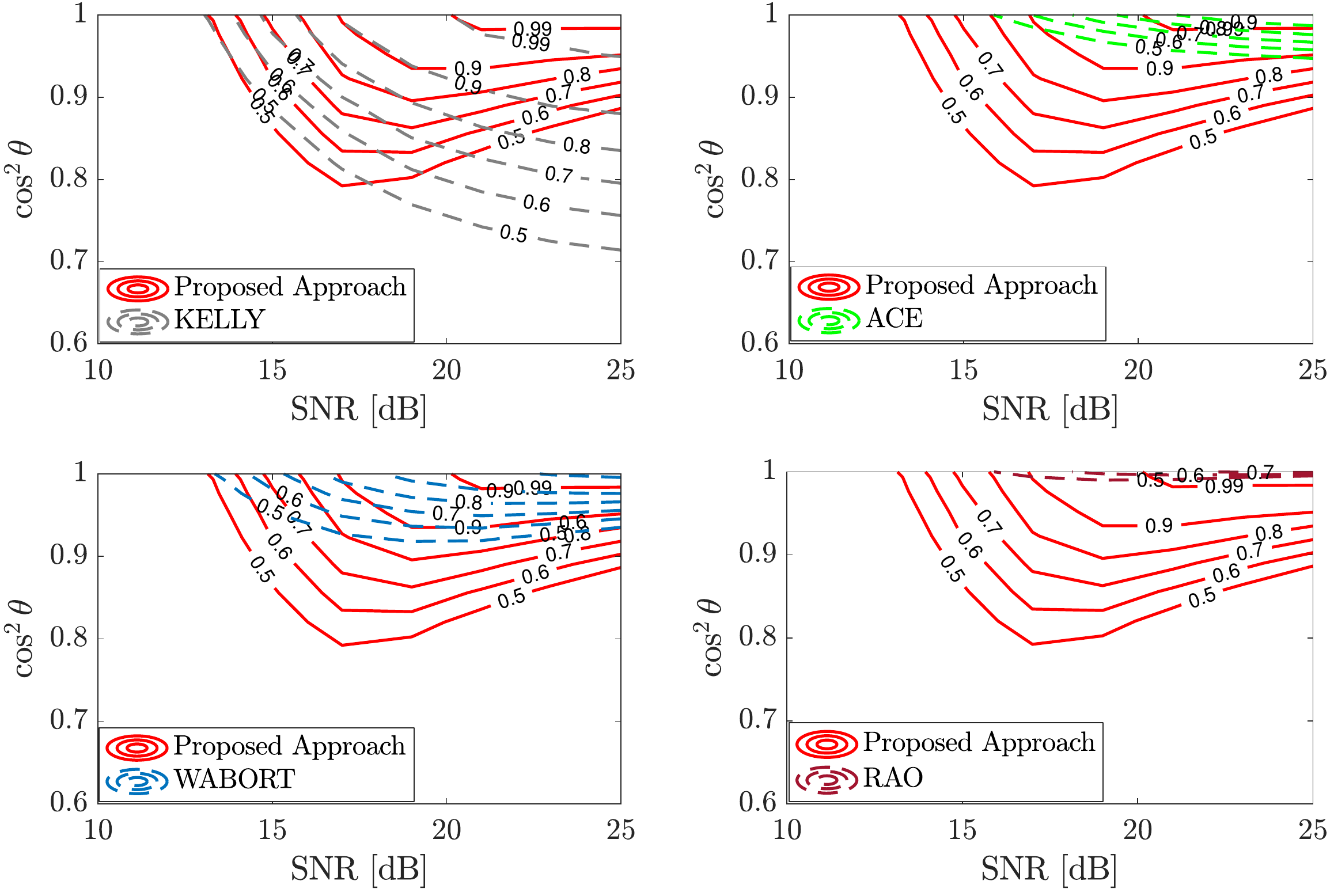}
		{\ed \caption{Mesa plots of the double-well detector compared to Kelly, ACE, WABORT and RAO detectors for $N=4$, $K=8$, evaluated on the Phase One dataset.}
		\label{fig:Mesa_doublewell_real}}
	\end{figure}
	\begin{figure}
		\centering
		\includegraphics[width=\columnwidth]{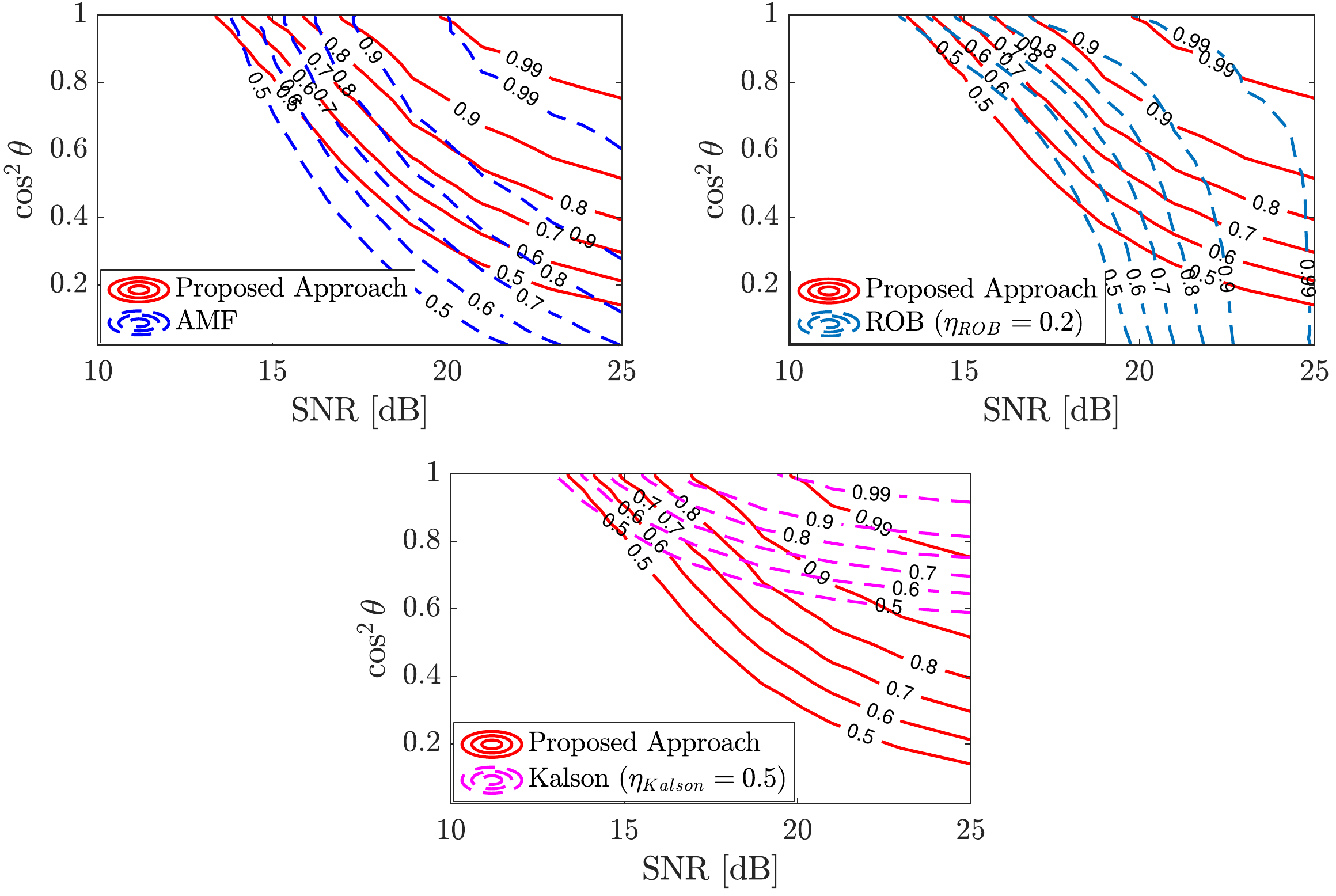}
		{\ed \caption{Mesa plots of the combined AMF-ROB detector compared to the AMF, ROB, and Kalson detectors for $N=4$, $K=8$, evaluated on the  Phase One dataset.}
		\label{fig:Mesa_AMFROB_real}}
	\end{figure}
	
We start the evaluation by estimating the actual $P_{fa}$ of the detectors when operating on the Phase One dataset. All the detectors thresholds are set to guarantee the desired $P_{fa} = 10^{-3}$ on simulated data. For the analysis, we select the 30-th range cell as the CUT and the $K/2$ adjacent range cells on each side of the CUT as secondary data. We construct the data vectors by selecting $N$ consecutive pulses from each range cell, with 1 pulse of overlap in order to obtain $10^4$ different snapshots, and perform the detection. The proposed double-well detector exhibits a $P_{fa} = 5 \cdot 10^{-4}$, a value that does not deviate too much from the nominal one and, remarkably, is even lower. Kelly and ACE detectors share the same value of $P_{fa} = 5 \cdot 10^{-4}$, while WABORT and RAO exhibit a $P_{fa} = 9 \cdot 10^{-4}$. Interestingly, all the selective detectors share very similar values of $P_{fa}$, which well approximate the nominal one. Analogous results have been obtained for the case of robust detectors: specifically, the combined AMF-ROB, AMF, and ROB detectors also exhibit a $P_{fa} = 5 \cdot 10^{-4}$, while Kalson has a $P_{fa} = 4 \cdot 10^{-4}$.

The performances in terms of $P_d$ are then evaluated by adding to the CUT a synthetic target $\alpha \bm{v}$, as done for the simulated data. To make the comparison precisely fair, we adjusted the thresholds of the WABORT, RAO, and Kalson so as to match the same $P_{fa} = 5 \cdot 10^{-4}$ of the other detectors. In Fig.~\ref{fig:Mesa_doublewell_real} and Fig.~\ref{fig:Mesa_AMFROB_real},  we compare the performance of the proposed double-well and combined AMF-ROB detectors against the same state-of-the-art competitors considered in Sec.~\ref{sec::double_well} and Sec.~\ref{perf::AMFROB}. As it can be noticed, the proposed design procedure confirms its effectiveness in correctly approximating the desired CFAR behaviors, with the proposed double-well detector that preserves the high $P_d$ of the Kelly's detector under matched conditions, while striking a more evident selectivity under mismatched conditions. Remarkably, the gain in terms of $P_d$ under matched conditions becomes much more pronounced compared to the ACE and RAO detectors, which turn out to be very selective. Similarly, the proposed combined AMF-ROB detector practically keeps the satisfactory robustness of the AMF detector, while guaranteeing the high detection power of the ROB under matched conditions. Its robust behavior still lies between ROB and Kalson, confirming the same findings on simulated data.}

\section{Conclusion}\label{sec:conclusions}

The paper has proposed a methodology for the optimal design of customized CFAR detectors in Gaussian disturbance, having desired behavior in terms of robustness or selectivity, and working at a preassigned $P_{fa}$ value. 
By exploiting a reinterpretation of CFAR detection in a suitable feature plane based on maximal invariant statistics (CFAR-FP), the optimal approximation problem has been formulated and analyzed. To overcome its analytical and numerical intractability, a general reduced-complexity approach has been developed, which seeks for a solution in a restricted, but sufficiently rich feasible set. Analytical expressions have been provided to ease the computational aspects, by assuming a piecewise-linear structure for the approximated decision region boundary.
The proposed algorithm  is very efficient in finding a detector satisfying the requirements and provides satisfactory performance, though it explores only a limited part of the solution space, which is otherwise extremely large.

Under this framework, two novel customized detectors have also been  designed and analyzed, for illustration purposes. 
One of them combines two existing detectors, and is able to achieve an intermediate robust behavior by taking the best from both, as desired. {\ed Indeed, since Kelly's is recognized as the ``best'' detector in terms of $P_d$ under matched conditions (close to the locally most powerful invariant detector \cite{Bose_Steinhardt}), one should try to combine the characteristics of Kelly's decision boundary with that of a more robust or selective detector, according to the desired behavior. Being ROB practically as powerful as Kelly's detector under matched conditions but very robust, we used a combination of AMF (which has a lower robustness) and ROB to find a good balance. The same approach can work for the selective case, by appropriate choice of the detectors to combine. In the second provided design example we followed a different strategy, that is to conceive the ``shape'' of the desired detector  based on considerations on the spread of point clusters under $H_0$ and $H_1$ for both matched and mismatched conditions, showing how a completely ad-hoc detector can be designed through the proposed methodology. In particular, an ad-hoc curve is drawn in the CFAR-FP whose shape allows for rejection of mismatched signals, but at the same time leads to high detection power under matched conditions.

Future work includes the investigation of alternative resolution approaches for problem \eqref{eq::inf_dim} that can consider additional feasible solutions, as this may lead closer to obtaining the ``best'' detector for a given set of specifications. Another interesting direction is to investigate possible extensions of the proposed framework to environments different from the homogeneous Gaussian.  
Actually, since most detectors of interest can be written as a function of a set of proper maximal invariant statistics \cite{DeMaio2015,Ciuonzo2016,Tang2020}, the rationale used in this paper can be extended by considering feature spaces of arbitrary dimensions: in fact, although for more than three dimensions a direct graphical representation is not possible, still in principle the curve-approximation problem turns into a hypersurface-approximation problem for which hyperplane-based tiles can be adopted instead of the piecewise-linear template to generalize the ``greedy'' logic of the proposed design algorithm (i.e., to adjust $P_{fa}$ by modifying one piece of the decision boundary at a time). More research is however needed to characterize clusters behaviors in such different feature spaces and identify suitable expressions to compute $P_{fa}$ (and, possibly, also  $P_d$).}

\appendices

\section{Weighting Strategies}\label{app:discussion}

As to the weights $w_i\in [0,1]$, different strategies can be identified to end up with an objective function able to encode the different design requirements (and their relative priority) in the proper way, summarized as follows:
\begin{itemize}
\item $w_i = 1, \forall i$
  \item $w_i=w(\gamma_i,\lambda_i)$
    \item $w_i=w(\bm{\epsilon}; \bm{\rho}_1,\ldots,\bm{\rho}_S), \forall i$.
\end{itemize}
The first option considers constant weights, hence is tantamount to having an (unweighted) least squares functional, which will treat all the $S$ specifications $\bm{\rho}_i$ equally. As a consequence, errors on higher probabilities of detection will have a dominant effect, being their impact on the objective function greater in magnitude.

The second option considers diversified weights given by a weighting function $w(\cdot)$ depending only on SNR and match/mismatch level specifications; its definition impacts on the relative priority given to the different specifications. The choice $w_i = 1/\psi_i^2$ has the special meaning of a relative error, thus would overcome the limitation of the first option ($w_i=1\, \forall i$). However, it would produce the opposite effect of giving too much weight to errors on low values of the probability of detection (under matched or mismatched conditions), which may be detrimental to the detection power for higher SNRs (an unacceptable behavior for a radar detector). So, other choices should be identified, e.g. an increasing function of $\lambda_i$, possibly constant in (a certain range of)  $\gamma_i$.
In fact, $w_i = \lambda_i$ (or $\lambda_i^\alpha$, $\alpha >0$, to also adjust the decay rate) would promote more priority to maximizing $P_d$ for any value of SNR, but at the same time would take into account the desired probability of deciding for $H_1$ under various level of mismatches. 
For $S=1$ with $\lambda_1=1$, $\psi_1=1$, and a chosen SNR $\gamma_1$, the minimization of $\mathcal{C}_1(\bm{\epsilon})$ would be  equivalent to the maximization of
\begin{align*}
P_{d} &= P(\tilde{t} >  f_{\bm{m}}(\beta; \bm{\epsilon})| \gamma_1,1) \nonumber \\
&= 1 - \int_0^1 F_{\tilde{t}|\beta, H_1} (  f_{\bm{m}}(\beta; \bm{\epsilon}) |\gamma_1) p(\beta) d\beta
\end{align*}
where $F_{\tilde{t}|\beta, H_1}(\cdot)$ is the CDF of $\tilde{t}$ given $\beta$ under the $H_1$ hypothesis,
hence the Neyman-Pearson inspired rationale is retrieved. A slight generalization of this formulation is the maximization of $P_d$ in a span of SNRs, with weights all equal or increasing/decreasing according to the given priority to the high/low SNR regime. 

In general, the main drawback of the second weighting strategy ($w_i=w(\gamma_i,\lambda_i)$) is that a fine-tuning of the weights is necessary, whose impact may be not completely predictable, thus resulting in a trial-and-error effort. For this reason, in the following we adopt the third strategy, in which weights are all equals but set to a certain function $w(\bm{\epsilon}; \bm{\rho}_1,\ldots,\bm{\rho}_S)$ that depends on all specifications as well as on the optimization parameters $\bm{\epsilon}$. A simplifying yet reasonable choice is to map such dependencies into the error function $e(\cdot)$ as
\begin{equation}
w_i =  \frac{1}{S}  \underbrace{\sqrt{\frac{1}{S-1}\sum_{i=1}^S \left(e(\bm{\epsilon}; \bm{\rho}_i) - \bar{e}(\bm{\epsilon}; \bm{\rho}_1,\ldots,\bm{\rho}_S)\right)^2}}_{\sigma_e}
\end{equation}
where $\bar{e}(\bm{\epsilon}; \bm{\rho}_1,\ldots,\bm{\rho}_S) = 1/S \sum_{j=1}^S e(\bm{\epsilon}; \bm{\rho}_i)$ and $\sigma_{e}$ denotes the empirical standard deviation of the squared error. By substituting this expression into \eqref{eq:P1}, we finally obtain \eqref{eq::finalcost}.

\section{Proof of Proposition \ref{Prop1}}\label{app:A}

We start from $\Psi(f_{m_i}(\beta; \epsilon_i))$, which can be  obtained as a more convenient rewriting of  $F_{\tilde{t}|\beta}(f_{m_i}(\beta; \epsilon_i))$. More precisely, we recall that in the general case \cite{BOR-MorganClaypool}
\begin{align*}
F_{\tilde{t}|\beta}&(f_{m_i}(\beta; \epsilon_i)) = \frac{f_{m_i}(\beta; \epsilon_i)}{(1+f_{m_i}(\beta; \epsilon_i))^{K\! -\! N\! +\! 1}}\sum_{k=0}^{K-N}\! \binom{K\!- \! N\! +\! 1}{1\! +\! k} \nonumber \\
& \times (f_{m_i}(\beta; \epsilon_i))^k \e^{-\frac{\delta^2_F}{1+f_{m_i}(\beta; \epsilon_i)}}\sum_{i=0}^k \left( \frac{\delta^2_F}{1 + f_{m_i}(\beta; \epsilon_i)}\right)^i \frac{1}{i!}.
\end{align*}
We then notice that, by manipulating  from \cite[eq. 6.5.13]{abramowitz} the innermost summation can be expressed in terms of Eulerian complete and upper incomplete gamma functions as
$$
\sum_{i=0}^k \left( \frac{\delta^2_F}{1 + f_{m_i}(\beta; \epsilon_i)}\right)^i \frac{1}{i!} = \e^{\frac{\delta^2_F}{1+f_{m_i}(\beta; \epsilon_i)}} \frac{\Gamma(1+k,\frac{\delta^2_F}{1+f_{m_i}(\beta; \epsilon_i)})}{\Gamma(1+k)}.
$$
The ultimate $\Psi(f_{m_i}(\beta; \epsilon_i))$  follows by plugging back the above  expression in $F_{\tilde{t}|\beta}(f_{m_i}(\beta; \epsilon_i))$ and by performing a change of variable $1+k = \ell$.

Similarly, $\Omega(\beta)$ can be obtained by a proper rewriting of $p(\beta)$. Specifically, we recall that
\begin{align*}
p(\beta) = &\frac{\e^{-\delta^2_\beta \beta}\beta^{K-N+1}}{(1-\beta)^{2-N}}\frac{\Gamma(K+1)}{\Gamma(K-N+2)}\sum_{j=0}^{K-N+2}\binom{K-N+2}{j} \nonumber \\
&\times \frac{1}{(N+j-2)!}\delta^{2j}_\beta (1-\beta)^{j}.
\end{align*}
The expression of $\Omega(\beta)$ can be derived by exploiting the similarity between the above summation and the generalized Laguerre polynomials $L^{(\alpha)}_n(x)$ of order $n=K-N+2$ for $\alpha=N-2$; in particular, by using \cite[eq. 13.6.9]{abramowitz} together with \cite[eq. 22.3.9]{abramowitz} it is possible to 
write the following identity:
\begin{align*}
\sum_{j=0}^{K-N+2} & (-1)^{j} \binom{K}{K-N+2-j} \frac{x^j}{j!} \\ 
&= \binom{K}{N-2} {}_1F_1(-K+N-2,N-1; x).
\end{align*}
Then, noticing that binomial/factorial terms in the two expressions can be related as
$$
\frac{\binom{K-N+2}{j}}{(N+j-2)!}  =  \binom{K}{K-N+2-j} \frac{(K-N+2)!}{K! j!}
$$
and that $(1-\beta)^j = (-1)^j (\beta-1)^j$,  we  obtain for $x=\delta_\beta^2(\beta-1)$
\begin{align*}
\sum_{j=0}^{K-N+2} &
\frac{\binom{K-N+2}{j}}{(N+j-2)!}  \delta^{2j}_\beta (1-\beta)^{j} \\
& = \frac{1}{\Gamma(N-1)}   {}_1F_1(-K+N-2,N-1;\delta_\beta^2(\beta-1)) 
\end{align*}
since $\frac{(K-N+2)!}{K! } \binom{K}{N-2} = \frac{1}{(N-2)!} = \frac{1}{\Gamma(N-1)}$.
The thesis follows by substituting back into $p(\beta)$ and recognizing that $\frac{\Gamma(K+1)}{\Gamma(K-N+2)\Gamma(N-1)} = \binom{K}{N-2}$.

\section{Proof of Proposition \ref{Prop2}}\label{app:B}
Under the $H_0$ hypothesis, we have that
\begin{align*}
r_i(\epsilon_i) = \int_{(i-1)/p}^{i/p}& \frac{f_{m_i}(\beta; \epsilon_i)}{(1+f_{m_i}(\beta; \epsilon_i))^{K-N+1}}\sum_{j=0}^{K-N}\binom{K-N+1}{1+j} \nonumber \\
&\times (f_{m_i}(\beta; \epsilon_i))^j p(\beta)  d\beta.
\end{align*}
By using a change of variable $1+j = z$ and noting that by the algebraic binomial formula $\sum_{z=0}^{K-N+1} \binom{K-N+1}{z}(f_{m_i}(\beta; \epsilon_i))^z = (1+f_{m_i}(\beta; \epsilon_i))^{K-N+1}$, $r_i(\epsilon_i)$ can be recast as
\begin{align*}
r_i(\epsilon_i) &= \int_{(i-1)/p}^{i/p} p(\beta)d\beta\\
&- \left( \int_{0}^{i/p} (1+f_{m_i}(\beta; \epsilon_i))^{-(K-N+1)}p(\beta) d\beta \right. \nonumber \\
& -\left. \int_{0}^{(i-1)/p} (1+f_{m_i}(\beta; \epsilon_i))^{-(K-N+1)}p(\beta) d\beta \right).
\end{align*}
Considering the integrals between braces, for $u=i/p$ or $u=(i-1)/p$ we have
\begin{align*}
&\int_{0}^{u} (1+f_{m_i}(\beta; \epsilon_i))^{-(K-N+1)}p(\beta) d\beta = \frac{K!}{(N\! -\! 2)! (K\! -\! N\! +1)!} \\
&\times \int_0^u (1+\epsilon_i+m_i\beta)^{-(K-N+1)} \beta^{K-N+1} (1-\beta)^{N-2} \mathrm{d}\beta
\end{align*}
which is a generalization of the integral definition of the Euler's Beta function; in particular, it can be computed by exploiting the following identity 
$$
\int_0^u \frac{x^n(1-x)^m}{(1+ax)^n}  \mathrm{d}x = \frac{x^{n+1}}{n+1} F_1(n+1,-m,n, n+2; u, -au)
$$
which is valid for $a>0$ (for $a=0$ returns the Beta function) and $0\leq u\leq 1$, $m$ and $n$ positive integers,
where $F_1(a,b,c,d; y,z)$ is the Appell $F_1$ (hypergeometric) function of two variables \cite[sec. 9.18]{grad}.

The result of Proposition \ref{Prop2} follows by noting that $\int_{(i-1)/p}^{i/p} p(\beta)d\beta = F_{\beta|H_0}(i/p) - F_{\beta|H_0} ((i-1)/p)$ and exploiting the identity above for $a=m_i/(1+\epsilon_i)$, $m=N-2$, and $n=K-N+1$, i.e., 
\begin{align*}
&\int_{0}^{u} (1+f_{m_i}(\beta; \epsilon_i))^{-(K-N+1)}p(\beta) d\beta \\
&= \frac{u}{K-N+2} \left(\frac{u}{1+\epsilon_i}\right)^{K-N+1} \nonumber \\
& \times F_1 \!\Big( \!K\! -\! N\! +2,2\! - \! N,K\! - \! N\! +1,K\! - \! N\! +3; u, -\frac{m_i u}{1+\epsilon_i}\Big)
\end{align*}
hence the thesis follows straight.

\section*{Acknowledgement}

The authors wish to thank Prof. J. B. Billingsley, MIT Lincoln Lab, for the Phase One L-band clutter data.

\end{document}